\documentclass[aps,pra,reprint,superscriptaddress,nofootinbib,longbibliography]{revtex4-2}

\usepackage{amsmath}
\usepackage{amssymb}
\usepackage{amsthm}
\usepackage{bm}
\usepackage{graphicx}
\usepackage{hyperref}
\usepackage{orcidlink}

\newtheorem{lemma}{Lemma}
\newtheorem{proposition}{Proposition}
\newtheorem{corollary}{Corollary}

\newcommand{\bra}[1]{\langle #1 |}
\newcommand{\ket}[1]{| #1 \rangle}
\newcommand{\braket}[2]{\langle #1 | #2 \rangle}
\newcommand{\Tr}{\operatorname{Tr}}
\newcommand{\eps}{\varepsilon}
\newcommand{\HH}{\mathcal{H}}

\newcommand{\E}{\mathbb{E}}
\newcommand{\PP}{\mathbb{P}}

\begin{document}

\title{Double-Exponential Quasi-Orthogonality:
The Geometry of Decoherence}

\author{Karl Svozil\,\orcidlink{0000-0001-6554-2802}}
\affiliation{Institute for Theoretical Physics, TU Wien,
Wiedner Hauptstrasse 8--10/136, A-1040 Vienna, Austria}
\email{karl.svozil@tuwien.ac.at}

\date{\today}

\begin{abstract}
A composite system of $N$ local $q$-level factors has dimension $D=q^N$.
Although at most $D$ vectors can be exactly orthogonal, a fixed squared-overlap
tolerance $\eps$ permits $M_\eps(D)\gtrsim\exp(c_\eps D)$ mutually
quasi-orthogonal directions.  Consequently
$M_\eps(q^N)\gtrsim\exp(c_\eps q^N)$: the dimension grows exponentially in
$N$, but its quasi-orthogonal capacity grows doubly exponentially.
L\'evy's lemma explains the accompanying concentration of regular observables,
and Johnson--Lindenstrauss scaling gives the complementary finite-set account
of exponential capacity.  For Haar-random pure states the sharper exact law
is
$\mathbb P(|\langle\phi|\psi\rangle|^2\geq\eps)=(1-\eps)^{D-1}$,
while typical Fubini--Study angles lie within $O(D^{-1/2})$ of $\pi/2$:
capacity explodes as angular structure homogenizes.  We turn these facts into
simultaneous overlap and trace-distance bounds for finite decoherence branch
families, including mixed environments and collective weak coherences.  The
results are conditional on typical relative environmental dynamics.  They
neither select a pointer basis nor identify coherence suppression with
readable or redundant records.
\end{abstract}

\maketitle

\section{Introduction}

The quantum measurement problem includes several logically distinct
questions: which observables acquire a preferred status, why outcome
statistics obey the Born rule~\cite{gleason,Wright_2019} and why individual observations appear to
have definite outcomes although the global state may remain a coherent
superposition~\cite{everett,schlosshauer-2007,schlosshauer-2019}.
Seventeen years after introducing his famous cat thought
experiment~\cite{schrodinger-gwsidqm1}, Schr\"odinger returned to this tension
in a 1952 Dublin colloquium.  If the wave equation applies universally, he
observed, a naive reading appears to turn macroscopic surroundings into a
``quagmire'' or ``featureless jelly'' and even seems to threaten the stability
of an unobserved wristwatch in a drawer~\cite{schroedinger-interpretation}.

Environmental
decoherence addresses an important operational part of this problem.  It
explains why interference between certain alternatives becomes inaccessible
to measurements on a system once information about those alternatives has
dispersed into its environment.  Decoherence does not, by itself, select one
member of the resulting global superposition as the unique outcome.

The modern operational response is not that the global superposition
disappears.  Rather, environmental
correlations make its relative phases inaccessible to the restricted
observables by which macroscopic systems are ordinarily interrogated.

Standard decoherence theory supplies the dynamical framework: the
system--environment interaction selects stable pointer observables and
correlates their alternatives with conditional environmental
states~\cite{joos1985emergence,RevModPhys.75.715,joos2003decoherence,schlosshauer-2007}.
High dimension alone does not guarantee decoherence: a relative conditional
unitary may, for example, remain the identity on the entire environmental
space.  The question addressed here is narrower: once an explicit typicality
assumption for the conditional environmental states or relative unitaries has
been specified, what suppression follows from high-dimensional geometry
alone?  Unit vectors sampled independently from a high-dimensional complex
Hilbert space normally have very small overlaps.  This is an elementary
instance of concentration of
measure~\cite{tao2012topics,vershynin2018high,ledoux2001concentration,milman1986asymptotic}.

The word \emph{independently} is essential.  Actual environmental branch
states arise from the same initial state under correlated conditional
dynamics.  Their relative unitary need not be Haar distributed, and a large
energy shell does not make it so.  We therefore formulate the geometric
results as conditional bounds, identify the precise dynamical object to which
they would apply, and distinguish three notions that are easily conflated:
small reduced-system coherences, distinguishable conditional environmental
states, and redundant classical records.

The individual geometric identities used below are standard.  The purpose of
this paper is to combine them into operational trace-distance estimates for
finite families of conditional environmental states, to extend the
mean-square analysis to mixed environments, and to delineate the dynamical
and information-theoretic assumptions needed before these kinematic estimates
can be interpreted as physical decoherence claims.  The organizing idea is
that tensor composition and quasi-orthogonal packing are separate
amplifications: the first makes dimension exponential in subsystem number,
and the second makes the number of available almost-orthogonal directions
exponential in that already enlarged dimension.  L\'evy concentration
explains why this near-orthogonality is typical, while the
Johnson--Lindenstrauss scaling explains, from the finite-set side, why the
number of simultaneously representable directions is exponential in
dimension.

\section{Two geometric amplifications}
\label{sec:two-amplifications}

Consider $N$ local factors of dimension $q\geq2$.  Their finite tensor
product has dimension
\begin{equation}
 D=q^N.
 \label{eq:tensor-dimension}
\end{equation}
This is the familiar exponential growth caused by composition.  It is not,
however, the only large-number effect.  Exact orthogonality allows no more
than $D$ mutually orthogonal vectors, but if the requirement is relaxed to
\begin{equation}
 |\langle\psi_i|\psi_j\rangle|^2\leq\eps\qquad(i\neq j),
 \label{eq:quasiorthogonality-preview}
\end{equation}
then projective Hilbert space admits families of cardinality
$M_\eps(D)\gtrsim\exp(c_\eps D)$ for every fixed $0<\eps<1$; the explicit
random-coding bound is given in Eq.~\eqref{eq:packing}.  Substitution of
Eq.~\eqref{eq:tensor-dimension} gives
\begin{equation}
 M_\eps(q^N)\gtrsim\exp(c_\eps q^N),
 \label{eq:double-explosion}
\end{equation}
which is doubly exponential in the number of factors.  The first exponential
therefore comes from tensor composition, while the second comes from allowing
almost, rather than exact, orthogonality.

The same phenomenon has a complementary angular description.  For two
independent Haar directions, the typical overlap amplitude is
$O(D^{-1/2})$, so their Fubini--Study angle satisfies
\begin{equation}
 \theta_{\rm FS}=\arccos|\langle\phi|\psi\rangle|
 =\frac{\pi}{2}-O(D^{-1/2}).
 \label{eq:angular-homogenization}
\end{equation}
As $D$ grows, an enormous number of directions becomes available, yet most
pairwise angles become nearly indistinguishable from a right angle.  We call
this \emph{angular homogenization}, or a concentration-induced loss of
angular resolution.  It is not a literal loss of quantum information and it
does not increase the linear dimension beyond $D$; it describes the
coarsening of typical pairwise angular structure at fixed experimental
resolution.

The two amplifications concern different variables.  Tensor composition
changes the ambient dimension as a function of $N$.  Concentration and
packing describe the distribution and number of directions inside that
ambient space.  The next two sections develop these statements through
L\'evy's and Johnson--Lindenstrauss lemmas before the exact overlap law is
used to obtain the sharper operational estimates.

\section{L\'evy concentration and angular homogenization}
\label{sec:levy}

Identify $\mathbb C^D$ with $\mathbb R^{2D}$ and write
\begin{equation}
 \mathbb S^{2D-1}=\{\ket{\psi}\in\mathbb C^D:
 \langle\psi|\psi\rangle=1\}.
\end{equation}
Uniform measure on this sphere is the pure-state measure induced by Haar
measure on $U(D)$.  A real-valued function $f$ on the sphere is
$L$-Lipschitz with respect to chordal distance if
\begin{equation}
 |f(\psi)-f(\varphi)|\leq L\|\psi-\varphi\|.
 \label{eq:lipschitz}
\end{equation}
In a convenient normalization, L\'evy's lemma states that
\cite{milman1986asymptotic,ledoux2001concentration,Hayden2006}
\begin{equation}
 \PP\{|f-\E f|\geq\delta\}
 \leq 2\exp\!\left[-\frac{(2D-1)\delta^2}{9\pi^3L^2}\right].
 \label{eq:levy}
\end{equation}
Versions centered at a median, and versions with different universal
constants, express the same dimension-dependent concentration.  The content
is that every sufficiently regular scalar probe is almost constant over
almost all of a high-dimensional sphere.

To connect this general theorem to quasi-orthogonality, fix a unit vector
$\ket{\phi}$ and use
\begin{equation}
 f_\phi(\psi)=|\langle\phi|\psi\rangle|^2.
 \label{eq:overlap-function}
\end{equation}
Unitary invariance and normalization give
\begin{equation}
 \E f_\phi=\frac1D.
 \label{eq:haar-overlap-mean-preview}
\end{equation}
Moreover,
\begin{align}
 |f_\phi(\psi)-f_\phi(\varphi)|
 &\leq
 \bigl(|\langle\phi|\psi\rangle|+|\langle\phi|\varphi\rangle|\bigr)
 |\langle\phi|\psi-\varphi\rangle| \nonumber\\
 &\leq2\|\psi-\varphi\|,
\end{align}
so $L\leq2$.  Equation~\eqref{eq:levy} therefore yields
\begin{equation}
 \PP\!\left\{
 \left||\langle\phi|\psi\rangle|^2-\frac1D\right|\geq\delta
 \right\}
 \leq2\exp\!\left[-\frac{(2D-1)\delta^2}{36\pi^3}\right].
 \label{eq:levy-overlap}
\end{equation}
Here concentration becomes quasi-orthogonality because the value about which
the function concentrates is itself small.  The sphere is overwhelmingly
equatorial relative to any fixed direction: states with appreciable overlap
occupy small spherical caps.

This application also shows exactly what the generic theorem does and does
not provide.  For a fixed excess threshold $\delta$, it gives an
exponentially small exceptional measure.  At the natural squared-overlap
scale $\delta=O(D^{-1})$, however, the displayed generic bound becomes weak.
The exact beta distribution in Lemma~\ref{lem:haar-overlap} resolves that
scale and gives the sharper tail $\exp[-O(D\eps)]$.  Thus L\'evy's lemma is
the general geometric mechanism; the beta law is the specialized tool used
for the quantitative decoherence and packing bounds below.

The corresponding projective statement is angular homogenization.  Since
the typical overlap amplitude is $O(D^{-1/2})$, the Fubini--Study angle
obeys Eq.~\eqref{eq:angular-homogenization}.  Concentration therefore has two
faces: almost all directions are almost orthogonal to a specified direction,
and most angular comparisons become compressed into a narrow band about
$\pi/2$.  This is the precise geometric sense in which angular information
becomes poorly resolved; no global state information is destroyed by the
unitary theory.

\section{Johnson--Lindenstrauss scaling and finite configurations}
\label{sec:jl}

L\'evy's lemma is a statement about measure on a sphere.  The
Johnson--Lindenstrauss lemma is complementary: it concerns the faithful
representation of a specified finite metric configuration.  For
$0<\delta<1$ and a set of $M$ points in Euclidean space, there exists a map
into $k$ dimensions, with
\begin{equation}
 k\leq C\delta^{-2}\log M,
 \label{eq:jl-dimension}
\end{equation}
that preserves all squared pairwise distances within multiplicative factors
$1\pm\delta$~\cite{johnson-1984}.  The universal constant $C$ and refinements
of the dependence on $\delta$ are immaterial here; the decisive feature is
the logarithmic dependence on $M$.

Apply this result to the $M$ standard basis vectors of $\mathbb R^M$ and
normalize their projected images.  Because their original norms are one and
their mutual squared distances are two, preservation of norms and distances
implies pairwise inner products of magnitude $O(\delta)$.  Real vectors are
a subset of complex Hilbert space, so the construction also supplies a
complex spherical code.  Reading Eq.~\eqref{eq:jl-dimension} in the opposite
direction gives
\begin{equation}
 M\geq\exp(c\delta^2D)
 \label{eq:jl-inverse}
\end{equation}
for a suitable fixed distortion and sufficiently large $D$.  If the desired
squared-overlap tolerance is denoted by $\eps$, then
$\eps=O(\delta^2)$ and Eq.~\eqref{eq:jl-inverse} has the same
$\exp(c\eps D)$ scale as the direct Haar random-coding bound.

The two results should nevertheless not be identified.  Johnson--Lindenstrauss
preserves the metric of an arbitrary finite point set after dimensional
reduction; the packing argument asks how many normalized vectors can be
placed in a fixed Hilbert space subject to an inner-product constraint.
Using the standard basis connects them constructively, while the exact
complex-Haar calculation below supplies the natural probability law and
better constants for the present problem.  Together, L\'evy concentration
and Johnson--Lindenstrauss scaling distinguish typicality from capacity:
the former says that near-orthogonality occupies almost all directions, and
the latter explains why an exponential number of finite directions can be
accommodated at fixed resolution.

\section{Decoherence factors and an operational bound}
\label{sec:operational}

Let $\{\ket{s_i}\}_{i=1}^{m}$ be orthonormal pointer alternatives spanning
the support of the system state, selected by a measurement-like interaction,
and let $\sum_i|c_i|^2=1$.  For a pure initial environment $\ket{E_0}$,
the joint evolution has the form $\ket{E_i(t)}=U_i(t)\ket{E_0}$, with
\begin{equation}
 \left(\sum_{i=1}^{m}c_i\ket{s_i}\right)\ket{E_0}
 \longmapsto
 \ket{\Psi(t)}=
 \sum_{i=1}^{m}c_i\ket{s_i}\ket{E_i(t)}
.
 \label{eq:branching}
\end{equation}
Tracing out the environment gives
\begin{equation}
 \rho_S(t)=
 \sum_{i,j=1}^{m}c_ic_j^*\gamma_{ji}(t)
 \ket{s_i}\bra{s_j},
 \qquad
 \gamma_{ji}(t)=\braket{E_j(t)}{E_i(t)}.
 \label{eq:reduced}
\end{equation}
Let $\Delta$ denote complete dephasing in the pointer basis,
\begin{equation}
 \Delta(\rho_S)=\sum_i\ket{s_i}\bra{s_i}\rho_S
 \ket{s_i}\bra{s_i}.
\end{equation}
Thus $\Delta$ is trace preserving on the relevant support.  All coherence
statements below are relative to this dynamically selected pointer basis;
they do not define a basis-independent quantity called coherence.
The trace distance
$D_{\rm tr}(\rho,\sigma)=\frac12\|\rho-\sigma\|_1$ has a direct operational
meaning: it controls the maximum change of outcome probabilities over all
measurements.  The environmental overlaps give the following elementary
bound.

\begin{proposition}[Reduced-state coherence bound]
\label{prop:trace-bound}
For the state in Eq.~\eqref{eq:reduced},
\begin{align}
 D_{\rm tr}\!\left(\rho_S,\Delta(\rho_S)\right)
 &\leq
 \min\!\left\{1,
 \frac12\sum_{i\neq j}|c_i c_j|\,|\gamma_{ji}|\right\}.
 \label{eq:trace-bound-general}
\end{align}
It also obeys the Hilbert--Schmidt bound
\begin{equation}
 D_{\rm tr}\!\left(\rho_S,\Delta(\rho_S)\right)
 \leq \min\!\left\{1,\frac12\sqrt{mZ}\right\},
 \label{eq:trace-bound-hs}
\end{equation}
where
\begin{equation}
 Z=\sum_{i\neq j}|c_i|^2|c_j|^2|\gamma_{ji}|^2.
 \label{eq:collective-coherence}
\end{equation}
If $|\gamma_{ji}|\leq\delta$ for all $i\neq j$, then
\begin{equation}
\begin{aligned}
 D_{\rm tr}\!\left(\rho_S,\Delta(\rho_S)\right)
 &\leq
 \min\!\left\{1,\frac{\delta}{2}
 \left[\left(\sum_i|c_i|\right)^2-1\right]
 \right\} \\
 &\leq \min\!\left\{1,\frac{m-1}{2}\delta\right\}.
 \label{eq:trace-bound-uniform}
\end{aligned}
\end{equation}
For two alternatives the first inequality is an equality:
$D_{\rm tr}=|c_1c_2|\,|\gamma_{12}|$.
\end{proposition}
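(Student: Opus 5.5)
The plan is to work with the off-diagonal operator $X=\rho_S-\Delta(\rho_S)=\sum_{i\neq j}c_ic_j^*\gamma_{ji}\ket{s_i}\bra{s_j}$. It is Hermitian because $\gamma_{ij}=\gamma_{ji}^*$, and the goal is to bound $\frac12\|X\|_1$ in several ways. The cap at $1$ is immediate, since the trace distance between two density operators never exceeds one; $\Delta(\rho_S)$ is a state because $\Delta$ is a pinching.

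For Eq.~\eqref{eq:trace-bound-general}, write $X=\sum_{i\neq j}c_ic_j^*\gamma_{ji}\,\ket{s_i}\bra{s_j}$ as a sum of rank-one operators with $\|\ket{s_i}\bra{s_j}\|_1=1$. The triangle inequality for the trace norm then gives $\|X\|_1\leq\sum_{i\neq j}|c_ic_j||\gamma_{ji}|$. For Eq.~\eqref{eq:trace-bound-hs}, note that $X$ is supported on the $m$-dimensional span of the pointer states, so it has rank at most $m$. Cauchy--Schwarz on its singular values gives $\|X\|_1\leq\sqrt m\,\|X\|_2$, and $\|X\|_2^2=\sum_{i\neq j}|c_i|^2|c_j|^2|\gamma_{ji}|^2=Z$ because the matrix units are Hilbert--Schmidt orthonormal. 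Halving yields $\frac12\sqrt{mZ}$.

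For the uniform bound, insert $|\gamma_{ji}|\leq\delta$ into the general bound and use $\sum_{i\neq j}|c_i||c_j|=(\sum_i|c_i|)^2-\sum_i|c_i|^2=(\sum_i|c_i|)^2-1$. Cauchy--Schwarz then gives $(\sum_i|c_i|)^2\leq m\sum_i|c_i|^2=m$, which produces $\frac{m-1}{2}\delta$.

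The step needing a real computation is the claimed equality for $m=2$. In that case $X=\begin{pmatrix}0&a\\a^*&0\end{pmatrix}$ with $a=c_1c_2^*\gamma_{21}$. Its eigenvalues are $\pm|a|$, so $\|X\|_1=2|a|$ and $D_{\rm tr}=|c_1c_2||\gamma_{12}|$. This coincides with $\frac12\sum_{i\neq j}|c_ic_j||\gamma_{ji}|$, since that sum contains the two equal terms $|a|$. Because the overlap satisfies $|a|\leq|c_1c_2|\leq\frac12<1$, the minimum with $1$ is inactive, so the first inequality indeed holds with equality.

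The only point needing care is being sure that $\Delta$ maps states to states and acts on the pointer support, so the rank-$m$ argument is valid. I expect no genuine obstacle here.
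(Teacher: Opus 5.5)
Your proof is correct and follows the paper's argument essentially step for step. It uses the same off-diagonal operator $X$, the triangle inequality with unit-trace-norm matrix units, the rank-$m$ bound $\|X\|_1\leq\sqrt m\,\|X\|_2$, the identity $\sum_{i\neq j}|c_ic_j|=(\sum_i|c_i|)^2-1$ with Cauchy--Schwarz, and the explicit singular values $\pm|c_1c_2\gamma_{12}|$ for $m=2$; your extra check that $|c_1c_2||\gamma_{12}|\leq\frac12$, so the cap at $1$ is inactive in the $m=2$ equality, fills in a detail the paper leaves implicit.
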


\begin{proof}
Writing
\begin{equation}
 X=\rho_S-\Delta(\rho_S)
 =\sum_{i\neq j}c_ic_j^*\gamma_{ji}\ket{s_i}\bra{s_j},
\end{equation}
the triangle inequality, the trivial bound $D_{\rm tr}\leq1$, and
$\|\ket{s_i}\bra{s_j}\|_1=1$ give
Eq.~\eqref{eq:trace-bound-general}.  Since the difference has rank at most
$m$, $\|X\|_1\leq\sqrt{m}\|X\|_2$ and direct evaluation of
$\|X\|_2^2=\Tr(X^\dagger X)$ give
Eq.~\eqref{eq:trace-bound-hs}.  The first bound in
Eq.~\eqref{eq:trace-bound-uniform} follows by taking the maximum overlap out
of the sum.  The second follows from
$\sum_i|c_i|\leq\sqrt{m}$ and $\sum_i|c_i|^2=1$.
For $m=2$, the two nonzero singular values of the off-diagonal difference are
both $|c_1c_2\gamma_{12}|$.
\end{proof}

Pairwise quasi-orthogonality is therefore useful only together with control
of the number and amplitudes of the alternatives.  A tiny entrywise bound can
fail to control a large Gram matrix if sufficiently many off-diagonal entries
accumulate.

\section{Haar geometry and simultaneous branch bounds}
\label{sec:haar}

\subsection{Exact overlap distribution}

\begin{lemma}[Complex Haar overlap]
\label{lem:haar-overlap}
Let $\ket{\psi},\ket{\phi}\in\mathbb C^D$, $D\geq2$, be independent
Haar-random unit vectors.  Then
\begin{equation}
 T=|\langle\phi|\psi\rangle|^2
 \sim\operatorname{Beta}(1,D-1),
 \label{eq:beta}
\end{equation}
and, for $0\leq\eps\leq1$,
\begin{equation}
 \PP(T\geq\eps)=(1-\eps)^{D-1}
 \leq e^{-(D-1)\eps}.
 \label{eq:tail}
\end{equation}
In particular, $\E T=1/D$.
\end{lemma}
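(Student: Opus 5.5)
By unitary invariance we may condition on $\ket{\phi}$ and fix it to be the first standard basis vector $\ket{1}$. The Haar measure on the unit sphere of $\mathbb C^D$ is invariant under $U(D)$. Independence of $\ket{\psi}$ and $\ket{\phi}$ then implies that, conditionally on $\ket{\phi}$, the law of $T$ equals that of $|\psi_1|^2$, the squared modulus of the first coordinate of a Haar-random unit vector. This conditional law does not depend on $\ket{\phi}$, so it is also the unconditional law of $T$.

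To identify the law of $|\psi_1|^2$, we use the Gaussian representation. Let $g=(g_1,\dots,g_D)$ have i.i.d. standard complex Gaussian entries. Then $\ket{\psi}=g/\|g\|$ is Haar distributed, because the law of $g$ is $U(D)$-invariant and the sphere measure is the unique invariant probability measure. Each $|g_k|^2$ is exponential with mean one, i.e.\ $\operatorname{Gamma}(1,1)$, and these are independent. Hence
\[
T=\frac{X_1}{X_1+Y},\qquad X_1\sim\operatorname{Gamma}(1,1),\quad Y=\sum_{k\geq2}|g_k|^2\sim\operatorname{Gamma}(D-1,1),
\]
with $X_1$ and $Y$ independent. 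The standard beta--gamma relation then gives $T\sim\operatorname{Beta}(1,D-1)$, which is Eq.~\eqref{eq:beta}.

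If one prefers to avoid quoting that relation, there is a direct check. The joint density of $(X_1,Y)$ is $e^{-x}y^{D-2}e^{-y}/\Gamma(D-1)$. Change variables to $t=x/(x+y)$ and $s=x+y$, with Jacobian $s$, and integrate out $s$. This yields the density $(D-1)(1-t)^{D-2}$ on $[0,1]$.

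The tail formula then follows by integrating the density:
\[
\PP(T\geq\eps)=\int_\eps^1 (D-1)(1-t)^{D-2}\,dt=(1-\eps)^{D-1}.
\]
The exponential bound comes from $1-\eps\leq e^{-\eps}$ raised to the nonnegative power $D-1$. Finally, $\E T=1/D$ is the mean $\alpha/(\alpha+\beta)$ of $\operatorname{Beta}(1,D-1)$. Equivalently, by exchangeability of the coordinates, $\sum_k\E|\psi_k|^2=1$ forces each term to equal $1/D$, which is consistent with Eq.~\eqref{eq:haar-overlap-mean-preview}.

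The only step that needs care is the identification of $g/\|g\|$ with the Haar pure-state measure, together with the reduction to a fixed $\ket{\phi}$ through independence. Both rest on the uniqueness of the rotation-invariant probability measure on $\mathbb S^{2D-1}$, which I would cite as standard rather than prove. Everything after that is a routine computation.
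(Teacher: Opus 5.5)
Your proposal is correct and follows the paper's proof essentially step for step: fix $\ket{\phi}=\ket{e_1}$ by unitary invariance, write $\ket{\psi}$ as a normalized complex Gaussian vector, recognize $T=X_1/(X_1+\cdots+X_D)$ (with independent mean-one exponentials) as $\operatorname{Beta}(1,D-1)$, and integrate the density $(D-1)(1-t)^{D-2}$. Your explicit Jacobian computation, the inequality $1-\eps\leq e^{-\eps}$, and the exchangeability check of $\E T=1/D$ only supply details that the paper leaves implicit.
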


\begin{proof}
By unitary invariance fix $\ket{\phi}=\ket{e_1}$.  A Haar-random unit
vector can be written as $Z/\|Z\|$, where the $Z_k$ are independent standard
complex Gaussian variables.  The variables $X_k=|Z_k|^2$ are independent
exponentials of mean one.  Hence
\begin{equation}
 T=\frac{X_1}{X_1+\cdots+X_D}
 \sim\operatorname{Beta}(1,D-1).
\end{equation}
Integration of its density $(D-1)(1-t)^{D-2}$ gives
Eq.~\eqref{eq:tail}; its mean is $1/D$.
\end{proof}

Thus the typical squared overlap is of order $D^{-1}$ and the typical
amplitude is of order $D^{-1/2}$.  The exact distribution is more informative
than a generic application of L\'evy's lemma because it retains the precise
dependence on the overlap threshold.  Equivalently, the Fubini--Study angle
is typically $\pi/2-O(D^{-1/2})$: concentration produces not only small
overlaps but also the angular homogenization described in
Sec.~\ref{sec:two-amplifications}.

\subsection{A finite-family theorem}

The quantity relevant for a measurement with several alternatives is the
largest pairwise overlap, rather than the overlap of one pair.

\begin{proposition}[Simultaneous overlap and trace-distance bound]
\label{prop:simultaneous}
Let $m\geq2$, let $\ket{E_1},\ldots,\ket{E_m}$ be independent Haar-random
unit vectors in $\mathbb C^D$, and let $0<\eta<1$.  Define
\begin{equation}
 K=\binom{m}{2},\,
 x_\eta=\frac{\log(K/\eta)}{D-1},\,
 \eps_\eta=1-e^{-x_\eta}.
 \label{eq:epsiloneta}
\end{equation}
With probability at least $1-\eta$,
\begin{equation}
 \max_{i\neq j}|\langle E_j|E_i\rangle|
 \leq\sqrt{\eps_\eta}
 \leq
 \sqrt{\frac{\log[\binom{m}{2}/\eta]}{D-1}}.
 \label{eq:max-overlap}
\end{equation}
Consequently, the branched state in Eq.~\eqref{eq:branching} obeys, with the
same probability,
\begin{align}
 D_{\rm tr}\!\left(\rho_S,\Delta(\rho_S)\right)
 &\leq
 \min\!\left\{1,
 \frac12\left[\left(\sum_i|c_i|\right)^2-1\right]
 \sqrt{\eps_\eta}\right\}
 \nonumber\\
 &\leq
 \min\!\left\{1,
 \frac{m-1}{2}
 \sqrt{\frac{\log[\binom{m}{2}/\eta]}{D-1}}\right\}.
 \label{eq:prob-trace-bound}
\end{align}
\end{proposition}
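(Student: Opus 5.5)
The plan is a union bound over pairs built on Lemma~\ref{lem:haar-overlap}, followed by a direct appeal to Proposition~\ref{prop:trace-bound}. For each unordered pair $i<j$, the vectors $\ket{E_i}$ and $\ket{E_j}$ are independent Haar-random unit vectors, so Lemma~\ref{lem:haar-overlap} applies to $T_{ij}=|\langle E_j|E_i\rangle|^2$. I will use the exact tail rather than the exponential bound: with $\eps_\eta=1-e^{-x_\eta}\in(0,1)$,
\begin{equation}
 \PP(T_{ij}\geq\eps_\eta)=(1-\eps_\eta)^{D-1}=e^{-(D-1)x_\eta}=\frac{\eta}{K}.
\end{equation}
The law of $T_{ij}$ is continuous, so whether the event is written with $\geq$ or $>$ does not matter. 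Since $|\gamma_{ij}|=|\gamma_{ji}|$, the maximum over ordered pairs $i\neq j$ equals the maximum over the $K=\binom{m}{2}$ unordered pairs. Boole's inequality then gives $\PP(\max_{i<j}T_{ij}>\eps_\eta)\leq K\cdot\eta/K=\eta$. Only marginal pair laws are used, so the pairs need not be independent of one another; this is the point to state explicitly.

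Next comes the elementary comparison $\eps_\eta\leq x_\eta$. It follows from $1-e^{-x}\leq x$ for $x\geq0$, and it gives the second inequality in Eq.~\eqref{eq:max-overlap} after taking square roots. One degenerate case needs care: if $K/\eta$ is so large that $x_\eta$ is big, then $\eps_\eta$ is still below $1$ and the bound remains valid. The positivity $x_\eta>0$ holds because $K\geq1$ and $\eta<1$.

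On the event just obtained, every $|\gamma_{ji}|\leq\delta:=\sqrt{\eps_\eta}$. The branched state of Eq.~\eqref{eq:branching} has exactly the form of Eq.~\eqref{eq:reduced} with $\gamma_{ji}=\langle E_j|E_i\rangle$; the $E_i$ here play the role of the $E_i(t)$. The uniform part of Proposition~\ref{prop:trace-bound}, Eq.~\eqref{eq:trace-bound-uniform}, therefore applies deterministically on this event. Its first line gives the first line of Eq.~\eqref{eq:prob-trace-bound}. Its second line, $\bigl[(\sum_i|c_i|)^2-1\bigr]\leq m-1$, combined with $\sqrt{\eps_\eta}\leq\sqrt{x_\eta}$ and the monotonicity of $\min\{1,\cdot\}$, gives the second line. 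The probability is the same $1-\eta$ because the bound holds pointwise on the same event.

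I expect no genuine obstacle. The only subtle point is interpretive: the Haar vectors are taken as given branch states, independent of the coefficients $c_i$, which are fixed. I will note that the union bound needs no independence across pairs.
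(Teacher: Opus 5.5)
Your proposal is correct and follows the paper's proof essentially step for step. Both use the exact tail of Lemma~\ref{lem:haar-overlap} with a union bound over the $\binom{m}{2}$ pairs, calibrated so that $(1-\eps_\eta)^{D-1}=\eta/K$, then the inequality $1-e^{-x}\leq x$, and finally the uniform case of Proposition~\ref{prop:trace-bound} applied pointwise on the good event; your remark that only pairwise marginal laws are needed also matches the paper's comment after the proof.
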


\begin{proof}
Lemma~\ref{lem:haar-overlap} and a union bound give
\begin{equation}
 \PP\!\left(\max_{i<j}|\langle E_j|E_i\rangle|^2>\eps\right)
 \leq\binom{m}{2}(1-\eps)^{D-1}.
\end{equation}
Equation~\eqref{eq:epsiloneta} makes the right-hand side equal to $\eta$.
The second inequality in Eq.~\eqref{eq:max-overlap} follows from
$1-e^{-x}\leq x$.  Proposition~\ref{prop:trace-bound} completes the proof.
\end{proof}

\begin{corollary}[Pure-Haar collective bound]
\label{cor:pure-haar-collective}
Under the hypotheses of Proposition~\ref{prop:simultaneous}, with probability
at least $1-\eta$,
\begin{equation}
 D_{\rm tr}\!\left(\rho_S,\Delta(\rho_S)\right)
 \leq
 \min\!\left\{1,
 \frac12\sqrt{
 \frac{m}{D\eta}
 \left(1-\sum_i|c_i|^4\right)}\right\}.
 \label{eq:pure-haar-collective}
\end{equation}
\end{corollary}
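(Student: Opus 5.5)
The plan is to combine the Hilbert--Schmidt bound~\eqref{eq:trace-bound-hs} of Proposition~\ref{prop:trace-bound} with a first-moment (Markov) estimate on the collective coherence $Z$ of Eq.~\eqref{eq:collective-coherence}, rather than with the union bound used for Proposition~\ref{prop:simultaneous}. The target is an estimate $\E Z\leq(1-\sum_i|c_i|^4)/D$. Markov's inequality will then put $Z$ below $\E Z/\eta$ with probability at least $1-\eta$, and substituting this into $\frac12\sqrt{mZ}$ gives exactly Eq.~\eqref{eq:pure-haar-collective}.

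The key step is the expectation of $Z$. For each ordered pair $i\neq j$, the vectors $\ket{E_i}$ and $\ket{E_j}$ are independent Haar unit vectors. Lemma~\ref{lem:haar-overlap} therefore gives
\begin{equation}
 \E|\gamma_{ji}|^2=\E\,|\langle E_j|E_i\rangle|^2=\frac1D .
\end{equation}
By linearity,
\begin{equation}
 \E Z=\frac1D\sum_{i\neq j}|c_i|^2|c_j|^2
 =\frac1D\left[\left(\sum_i|c_i|^2\right)^2-\sum_i|c_i|^4\right]
 =\frac1D\left(1-\sum_i|c_i|^4\right),
\end{equation}
using the normalization $\sum_i|c_i|^2=1$. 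No independence beyond pairwise independence is needed, and no correlations between different pairs enter, because only first moments are taken.

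For the probabilistic step, $Z\geq0$, so Markov's inequality gives $\PP(Z>\E Z/\eta)\leq\eta$. On the complementary event, which has probability at least $1-\eta$, Eq.~\eqref{eq:trace-bound-hs} yields
\begin{equation}
 D_{\rm tr}\leq\min\left\{1,\tfrac12\sqrt{mZ}\right\}\leq\min\left\{1,\tfrac12\sqrt{\tfrac{m}{D\eta}\left(1-\textstyle\sum_i|c_i|^4\right)}\right\},
\end{equation}
since both $\min\{1,\cdot\}$ and the square root are monotone.

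There is no serious obstacle. The only point needing care is to check that the Hilbert--Schmidt route really uses $Z$ with the weights $|c_i|^2|c_j|^2$, i.e.\ that $\|X\|_2^2=\sum_{i\neq j}|c_i|^2|c_j|^2|\gamma_{ji}|^2$. This is immediate because the operators $\ket{s_i}\bra{s_j}$ are Hilbert--Schmidt orthonormal. One should also note that the degenerate case $Z=0$ is harmless for Markov's inequality.
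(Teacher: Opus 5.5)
Your proposal is correct and follows the paper's proof essentially step for step. You compute $\E Z=(1-\sum_i|c_i|^4)/D$ from $\E|\gamma_{ji}|^2=D^{-1}$ via Lemma~\ref{lem:haar-overlap}, apply Markov's inequality to $Z$, and substitute into the Hilbert--Schmidt bound~\eqref{eq:trace-bound-hs}. Your extra checks are correct and make explicit what the paper leaves implicit: $\|X\|_2^2=Z$ follows from Hilbert--Schmidt orthonormality of the $\ket{s_i}\bra{s_j}$, and the degenerate case $\E Z=0$ is harmless.
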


\begin{proof}
For $i\neq j$, Lemma~\ref{lem:haar-overlap} gives
$\E|\gamma_{ji}|^2=D^{-1}$.  Therefore the quantity $Z$ in
Eq.~\eqref{eq:collective-coherence} satisfies
\begin{equation}
 \E Z=\frac{1}{D}\left(1-\sum_i|c_i|^4\right).
\end{equation}
Markov's inequality gives $Z\leq\E Z/\eta$ with probability at least
$1-\eta$.  Substitution into Eq.~\eqref{eq:trace-bound-hs} proves the claim.
\end{proof}

This collective estimate is complementary to Proposition~\ref{prop:simultaneous}:
it can improve the branch-number scaling for broadly distributed weak
coherences, whereas its confidence dependence is only $\eta^{-1/2}$ rather
than logarithmic.

For fixed $m$ and fixed confidence, the operational error in this bound is
$O(D^{-1/2})$.  If $m$ grows with $D$, the accumulation factor in
Eq.~\eqref{eq:prob-trace-bound} must be retained.  This qualification is
important in discussions of very large branch families.  For comparable
amplitudes the displayed bound scales as
$m\sqrt{\log(m)/D}$ and therefore becomes nontrivial in a considerably more
restricted regime than the pairwise packing bound below.  The union bound
does not require mutual independence of the pair-overlap events.  The same
conclusion holds for any joint ensemble whose pairwise marginals have the
Haar overlap distribution of Lemma~\ref{lem:haar-overlap}.

The tradeoff can also be read directly from Eq.~\eqref{eq:epsiloneta}.  At
fixed $D$ and failure probability $\eta$, increasing the number of branches
$m$ increases the overlap threshold $\eps_\eta$ that can be guaranteed.
Equivalently, keeping both $\eta$ and the allowed overlap fixed while
increasing $m$ requires a larger effective dimension $D$.

\subsection{Packing is not classical memory capacity}

The same union-bound calculation gives a random-coding statement.  For
$0\leq\eps<1$, let
$M_\eps(D)$ denote the largest cardinality of a family of unit vectors in
$\mathbb C^D$ satisfying
$|\langle\psi_i|\psi_j\rangle|^2\leq\eps$ for every $i\neq j$.  Then
\begin{equation}
 M_\eps(D)\geq
 \left\lfloor
 \exp\!\left[\frac{D-1}{2}\bigl(-\log(1-\eps)\bigr)\right]
 \right\rfloor.
 \label{eq:packing}
\end{equation}
Indeed, for this number of independent Haar samples the union-bound
probability of a violating pair is strictly below one.  For fixed small
$\eps$, Eq.~\eqref{eq:packing} is exponential in $\eps D$.  In particular,
for $D=q^N$ it yields
\begin{equation}
 M_\eps(q^N)\geq
 \left\lfloor\exp\!\left[
 \frac{q^N-1}{2}\bigl(-\log(1-\eps)\bigr)
 \right]\right\rfloor,
 \label{eq:packing-tensor}
\end{equation}
a doubly exponential growth with $N$.  Likewise, for any fixed pair the
failure probability in Eq.~\eqref{eq:tail} is bounded by
$\exp[-\eps(q^N-1)]$, which is doubly exponentially small in $N$.
This second amplification is present only after the tolerance is fixed: the
linear dimension remains $q^N$, and no more than $q^N$ states can be exactly
orthogonal.

This direct complex-Haar construction realizes the same exponential scale as
the Johnson--Lindenstrauss argument in Sec.~\ref{sec:jl}.  Here, however, the
constant follows from the exact overlap tail, and the result directly
controls the squared inner products relevant to environmental branches.

This spherical-code packing bound is not a capacity for reliably readable
classical records.  When $M>D$, the states are linearly dependent and cannot
be perfectly discriminated by a single-copy measurement.  The Welch bound
further gives
\begin{equation}
 \max_{i\neq j}|\langle\psi_i|\psi_j\rangle|^2
 \geq \frac{M-D}{D(M-1)}
 \qquad (M>D),
 \label{eq:welch}
\end{equation}
Equation~\eqref{eq:welch} is a geometric restriction, not by itself a bound
on a specified decoder~\cite{welch1974coherence}.  Approximate discrimination
requires a prior distribution, a measurement, and an error criterion.  For a
uniform ensemble of $M$ pure states in dimension $D$, the Holevo bound
$\chi\leq\log D$, combined with Fano's inequality, gives a nonzero lower bound
on the average decoding error when $\log M-\log D$ is sufficiently large
\cite{Holevo1973,nielsen-book10}.  Equation~\eqref{eq:packing} is therefore
best viewed as a statement about projective Hilbert-space geometry.  The
operationally relevant decoherence statement is instead
Proposition~\ref{prop:simultaneous} for a specified family of alternatives.

\section{Mixed environments: coherence is not a record}
\label{sec:mixed}

This section isolates a counterintuitive but important distinction.  Greater
initial mixedness can reduce the Haar-averaged decoherence factor while
simultaneously reducing, or even eliminating, the environment's capacity to
encode a readable record.  These are different operational quantities, not a
paradox.

For an initially mixed environmental state $\rho_E$, a controlled unitary
produces reduced-system coherence factors
\begin{equation}
\begin{aligned}
 \gamma_{ji}(t)
 &=\Tr\!\left[U_i(t)\rho_EU_j(t)^\dagger\right]
 =\Tr\!\left[\rho_E W_{ji}(t)\right],
 \\
 W_{ji}(t)&=U_j(t)^\dagger U_i(t).
 \label{eq:mixedgamma}
\end{aligned}
\end{equation}
The relevant random object is the \emph{relative} conditional unitary
$W_{ji}(t)$, not two independently postulated purifications.

\begin{lemma}[Haar relative unitary]
\label{lem:mixed}
Let $W$ be Haar-random on a $D$-dimensional Hilbert space and let $\rho$ be a
fixed density operator, chosen independently of $W$, on that space.  Then
\begin{equation}
 \E_W|\Tr(\rho W)|^2=\frac{\Tr(\rho^2)}{D}.
 \label{eq:mixedhaar}
\end{equation}
\end{lemma}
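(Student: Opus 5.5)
We must show that $\E_W|\Tr(\rho W)|^2=\Tr(\rho^2)/D$. The plan is to diagonalize $\rho$, reduce the expectation to second moments of Haar matrix entries, and then use the elementary moment identities, obtained via Lemma~\ref{lem:haar-overlap}.

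\emph{Step 1: diagonalize $\rho$.} Write $\rho=V\Lambda V^\dagger$ with $V$ unitary and $\Lambda=\operatorname{diag}(\lambda_1,\ldots,\lambda_D)$. Since $\rho$ is independent of $W$ and Haar measure is invariant under conjugation $W\mapsto V^\dagger W V$, we have $\Tr(\rho W)=\Tr(\Lambda V^\dagger W V)$ with $V^\dagger W V$ again Haar. We may therefore assume $\rho=\Lambda$ is diagonal. Then $\Tr(\rho W)=\sum_k\lambda_k W_{kk}$, and
\begin{equation}
 \E|\Tr(\rho W)|^2=\sum_{k,l}\lambda_k\lambda_l\,\E\bigl[W_{kk}W_{ll}^*\bigr].
\end{equation}

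\emph{Step 2: compute the second moments of the diagonal entries.} For $k=l$, the column $W\ket{e_k}$ is a Haar-random unit vector. Hence $|W_{kk}|^2=|\langle e_k|W e_k\rangle|^2$ is $\operatorname{Beta}(1,D-1)$ by Lemma~\ref{lem:haar-overlap}, with mean $1/D$.

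For $k\neq l$, we show $\E[W_{kk}W_{ll}^*]=0$ by a phase-invariance argument. Multiply $W$ on the left by the diagonal unitary $P=\operatorname{diag}(1,\ldots,e^{i\alpha},\ldots,1)$, with the phase in position $k$. This preserves Haar measure and maps $W_{kk}\mapsto e^{i\alpha}W_{kk}$ while leaving $W_{ll}$ unchanged. Therefore $\E[W_{kk}W_{ll}^*]=e^{i\alpha}\,\E[W_{kk}W_{ll}^*]$ for every $\alpha$, which forces the expectation to vanish.

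\emph{Step 3: combine.} The cross terms drop out, so
\begin{equation}
 \E|\Tr(\rho W)|^2=\sum_k\lambda_k^2\cdot\frac1D=\frac{\Tr(\rho^2)}{D}.
\end{equation}

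\emph{Expected obstacle.} The only delicate point is the vanishing of the cross moments, and the one-sided phase invariance of Step 2 settles it without Weingarten calculus. As a cross-check, for pure $\rho$ the result reduces to the $1/D$ of Lemma~\ref{lem:haar-overlap}, and for $\rho=I/D$ it gives $1/D^2$, which matches $\E|\Tr W|^2=1$ for Haar $W$.
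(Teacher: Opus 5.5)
Your proof is correct, but it reaches the second moments by a different route from the paper. The paper does not diagonalize $\rho$. It expands $|\Tr(\rho W)|^2=\sum_{a,b,c,d}\rho_{ba}\rho_{dc}^*W_{ab}W_{cd}^*$ in an arbitrary fixed basis and applies the full entry correlation $\E[W_{ab}W_{cd}^*]=D^{-1}\delta_{ac}\delta_{bd}$, which it asserts from Haar invariance rather than derives; this leaves $D^{-1}\sum_{a,b}|\rho_{ab}|^2$. You instead use conjugation invariance to make $\rho$ diagonal, so that only $\E|W_{kk}|^2$ and $\E[W_{kk}W_{ll}^*]$ are needed. You then derive these two special cases yourself, from Lemma~\ref{lem:haar-overlap} (a column of $W$ is a Haar vector) and from left phase invariance.

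Your version is more self-contained, since it proves the moments it uses instead of quoting a Weingarten-type identity. The paper's version works under a weaker hypothesis: it uses nothing about $W$ except the entry correlation in one basis. That is exactly the assumption Proposition~\ref{prop:mixed-simultaneous} later imposes on the marginal ensembles of the $W_{ji}$, which need not be Haar.

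Your argument relies on full Haar properties at three points: conjugation invariance, phase invariance, and the Beta law for a column. It therefore does not transfer to that setting without an extra remark. The remark needed is that the entry correlation is equivalent to $\E[W\otimes\bar W]=\ket{\Omega}\bra{\Omega}$ with $\ket{\Omega}=D^{-1/2}\sum_a\ket{a}\ket{a}$. This state is invariant under $V\otimes\bar V$ for unitary $V$, so the correlation also holds in the eigenbasis of $\rho$, and both moments you need are special cases of it.
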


\begin{proof}
In a fixed basis, Haar invariance gives the second-moment identity
\begin{equation}
 \E_W[W_{ab}W_{cd}^*]=\frac{1}{D}\delta_{ac}\delta_{bd}.
\end{equation}
Expanding $|\Tr(\rho W)|^2$ and applying this identity yields
$D^{-1}\sum_{a,b}|\rho_{ab}|^2=D^{-1}\Tr(\rho^2)$.
\end{proof}

The identity requires only the first unitary-design moment, equivalently the
matrix-entry correlation $\E[W_{ab}W_{cd}^*]$.  It immediately yields a
simultaneous mixed-state statement.

\begin{proposition}[Mixed-environment simultaneous bounds]
\label{prop:mixed-simultaneous}
Let $\HH_D$ be a fixed $D$-dimensional environmental sector, let $\rho_E$ be
a density operator on $\HH_D$, and let the normalized coefficient vector
$(c_i)$ be fixed independently of a specified ensemble of relative unitaries
$W_{ji}$ acting on $\HH_D$.  Suppose that, for every $i<j$, the marginal
ensemble of $W_{ji}$ reproduces the matrix-entry correlation
$\E[W_{ab}W_{cd}^*]=D^{-1}\delta_{ac}\delta_{bd}$ on $\HH_D$.  The
probability refers to the specified ensemble,
such as Hamiltonian disorder, circuit randomness, conditional-perturbation
randomness, or a prescribed random-time sampling procedure.  No mutual
independence of the different $W_{ji}$ is required.  Put
$K=\binom{m}{2}$ and let $0<\eta<1$.  For every $\delta>0$,
\begin{equation}
 \PP\!\left\{\max_{i<j}|\gamma_{ji}|\geq\delta\right\}
 \leq K\frac{\Tr(\rho_E^2)}{D\delta^2},
 \label{eq:mixed-union-tail}
\end{equation}
and hence, with probability at least $1-\eta$,
\begin{equation}
 \max_{i<j}|\gamma_{ji}|
 \leq
 \min\!\left\{1,
 \sqrt{\frac{K\Tr(\rho_E^2)}{D\eta}}\right\}.
 \label{eq:mixed-max-bound}
\end{equation}
Moreover, with probability at least $1-\eta$,
\begin{equation}
\begin{split}
 D_{\rm tr}\!\left(\rho_S,\Delta(\rho_S)\right)
 \leq \min\!\left\{1,
 \frac12\sqrt{
 \frac{m\Tr(\rho_E^2)}{D\eta}
 \left(1-\sum_i|c_i|^4\right)}\right\}.
 \label{eq:mixed-trace-tail}
\end{split}
\end{equation}
\end{proposition}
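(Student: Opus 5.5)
The plan is to combine the second-moment identity of Lemma~\ref{lem:mixed} with Chebyshev/Markov-type inequalities and then feed the result into Proposition~\ref{prop:trace-bound}. The proof of Lemma~\ref{lem:mixed} used only the matrix-entry correlation $\E[W_{ab}W_{cd}^*]=D^{-1}\delta_{ac}\delta_{bd}$, so it applies verbatim to each marginal ensemble of $W_{ji}$ assumed in the statement. Since $\rho_E$ is fixed, this gives, for every pair $i<j$,
\begin{equation}
 \E|\gamma_{ji}|^2=\E|\Tr(\rho_E W_{ji})|^2=\frac{\Tr(\rho_E^2)}{D}.
\end{equation}
No joint information about different pairs is needed; every subsequent step uses only these pairwise expectations together with linearity of expectation or a union bound.

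For Eq.~\eqref{eq:mixed-union-tail}, apply Markov's inequality to the nonnegative variable $|\gamma_{ji}|^2$:
\begin{equation}
 \PP(|\gamma_{ji}|\geq\delta)\leq\frac{\Tr(\rho_E^2)}{D\delta^2}.
\end{equation}
A union bound over the $K$ pairs then gives the claim; as in Proposition~\ref{prop:simultaneous}, the union bound needs no independence. To obtain Eq.~\eqref{eq:mixed-max-bound}, set the right-hand side equal to $\eta$, that is, choose $\delta=\sqrt{K\Tr(\rho_E^2)/(D\eta)}$. The strict-versus-nonstrict inequality is harmless here. The cap at $1$ is automatic, because $|\Tr(\rho_E W)|\leq\|\rho_E\|_1\|W\|_\infty=1$ for unitary $W$.

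For Eq.~\eqref{eq:mixed-trace-tail}, reproduce the proof of Corollary~\ref{cor:pure-haar-collective}. First note that Eq.~\eqref{eq:reduced} and the definition of $Z$ in Eq.~\eqref{eq:collective-coherence} carry over to mixed $\rho_E$ with the factors $\gamma_{ji}$ of Eq.~\eqref{eq:mixedgamma}. The Hilbert--Schmidt argument of Proposition~\ref{prop:trace-bound} used only the matrix $X$, not the purity of the environment, so Eq.~\eqref{eq:trace-bound-hs} still holds. Because the $c_i$ are fixed independently of the ensemble, linearity gives
\begin{equation}
 \E Z=\frac{\Tr(\rho_E^2)}{D}\sum_{i\neq j}|c_i|^2|c_j|^2
 =\frac{\Tr(\rho_E^2)}{D}\left(1-\sum_i|c_i|^4\right).
\end{equation}
The last step uses $\sum_{i\neq j}|c_i|^2|c_j|^2=(\sum_i|c_i|^2)^2-\sum_i|c_i|^4$. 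Markov's inequality then gives $Z\leq\E Z/\eta$ with probability at least $1-\eta$, and substituting into Eq.~\eqref{eq:trace-bound-hs} yields the bound.

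The ordered pairs $(j,i)$ and $(i,j)$ need care. Since $W_{ij}=W_{ji}^\dagger$, we have $\gamma_{ij}=\gamma_{ji}^*$, so the hypothesis stated for $i<j$ covers all off-diagonal terms. I expect this bookkeeping, together with checking that Proposition~\ref{prop:trace-bound} transfers to mixed environments, to be the only point needing attention; everything else is routine.
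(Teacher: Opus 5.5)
Your proposal is correct and follows the paper's proof essentially step for step. You use Markov's inequality on $|\gamma_{ji}|^2$ with the second moment from Lemma~\ref{lem:mixed}, then a union bound over the $K$ pairs, then Markov's inequality on the collective quantity (the paper's $Y$, your $Z$) substituted into Eq.~\eqref{eq:trace-bound-hs}; your extra checks (the cap at $1$, $\gamma_{ij}=\gamma_{ji}^*$ for both orderings, and the validity of Eq.~\eqref{eq:reduced} and the Hilbert--Schmidt bound for mixed $\rho_E$) are details the paper leaves implicit, and they are right.
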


\begin{proof}
Markov's inequality and Lemma~\ref{lem:mixed} give
$\PP\{|\gamma_{ji}|\geq\delta\}\leq
\Tr(\rho_E^2)/(D\delta^2)$.  A union bound over the $K$ pairs proves
Eqs.~\eqref{eq:mixed-union-tail} and \eqref{eq:mixed-max-bound}.
For the stronger trace-distance estimate, define
\begin{equation}
 Y=\sum_{i\neq j}|c_i|^2|c_j|^2|\gamma_{ji}|^2.
\end{equation}
The assumed marginal second moments imply
\begin{equation}
 \E Y=\frac{\Tr(\rho_E^2)}{D}
 \left(1-\sum_i|c_i|^4\right).
\end{equation}
Markov's inequality gives $Y\leq\E Y/\eta$ with probability at least
$1-\eta$.  Substitution into Eq.~\eqref{eq:trace-bound-hs} proves
Eq.~\eqref{eq:mixed-trace-tail}.
\end{proof}

Writing $d_{\rm purity}=1/\Tr(\rho_E^2)$ makes the mean-square scale
\begin{equation}
 \frac{\Tr(\rho_E^2)}{D}=\frac{1}{D d_{\rm purity}}.
\end{equation}
Thus $D$ characterizes the relative-unitary ensemble, whereas
$d_{\rm purity}$ characterizes the initial environmental mixedness.

For a pure environment, Eq.~\eqref{eq:mixedhaar} reduces to $1/D$.  For the
maximally mixed state it gives $1/D^2$.  This stronger suppression does
\emph{not} mean that a maximally mixed environment stores a better record.
The conditional environmental states are
\begin{equation}
 \rho_E^{(i)}=U_i\rho_EU_i^\dagger.
 \label{eq:conditional-density}
\end{equation}
If $\rho_E=I/D$, all of them are identical, so no measurement on the
environment can reveal $i$, even though $\Tr(W_{ji})/D$ may be small.

For pure conditional states, the connection between overlap and record
distinguishability is direct.  With equal priors, the minimum error probability
for discriminating two pure states is
\begin{equation}
 p_{\rm err}
 =\frac12\left(1-\sqrt{1-|\langle E_j|E_i\rangle|^2}\right).
\end{equation}
Equivalently,
\begin{equation}
 D_{\rm tr}\!\left(\ket{E_i}\bra{E_i},\ket{E_j}\bra{E_j}\right)
 =\sqrt{1-|\langle E_j|E_i\rangle|^2}.
 \label{eq:pure-record-distance}
\end{equation}
For mixed states, however, distinguishability is governed by trace distance
or fidelity between the density operators in
Eq.~\eqref{eq:conditional-density}, not by a chosen purification overlap
alone.  In
particular, unitary invariance gives
\begin{equation}
 D_{\rm tr}\!\left(\rho_E^{(i)},\rho_E^{(j)}\right)
 =\frac12\left\|W_{ji}\rho_EW_{ji}^\dagger-\rho_E\right\|_1,
 \label{eq:mixed-record-distance}
\end{equation}
which is generally unrelated to the magnitude of $\Tr(\rho_EW_{ji})$.
Decoherence, locally accessible records, and redundant records are therefore
different properties.  The latter requires information about the pointer
variable to be recoverable from many disjoint environmental
fragments~\cite{blumekohout2006darwinism,zwolak2010nonideal,riedel2016objective}.
Indeed, generic global random states need not have the branching structure
needed for redundant storage~\cite{blumekohout2006darwinism}.  For initially
mixed environments or more general mixed global states, system decoherence
can occur without system--environment entanglement or objective
records~\cite{garciaperez2020decoherence}.  By contrast, for an initially
pure global state, nontrivial reduced-state mixedness entails
system--environment entanglement.

\section{The dynamical bridge}
\label{sec:dynamics}

The preceding propositions are kinematic.  In a microscopic pure-dephasing
model with time-independent conditional environmental Hamiltonians $H_i$,
\begin{equation}
 \gamma_{ji}(t)
 =\bra{E_0}e^{iH_jt}e^{-iH_it}\ket{E_0}.
 \label{eq:echo}
\end{equation}
Its modulus squared is a Loschmidt echo.  Decoherence can consequently be
studied using fidelity decay, spectral correlations, perturbation theory, and
many-body orthogonality effects~\cite{cucchietti2003echo,gorin2004echo}.
Equation~\eqref{eq:echo} also exposes why a large Hilbert space is insufficient:
$e^{iH_jt}e^{-iH_it}$ may remain close to the identity, preserve large
invariant subspaces, or display revivals long before any Poincar\'e recurrence.

Several dimensions should be distinguished.  For an energy window
$[E,E+\Delta E]$, let
\begin{equation}
 D_{\rm shell}=\Tr\,\mathbf 1_{[E,E+\Delta E]}(H_E).
\end{equation}
The Boltzmann entropy associated with that convention is
\begin{equation}
 S_B(E,\Delta E)=k_B\log D_{\rm shell}.
 \label{eq:boltzmann}
\end{equation}
By contrast, let $H_E=\sum_\lambda E_\lambda P_\lambda$ be the spectral
decomposition into distinct energy eigenspaces and put
\begin{equation}
 p_\lambda=\bra{E_0}P_\lambda\ket{E_0},
 \qquad
 d_{\rm spec}=\frac{1}{\sum_\lambda p_\lambda^2}.
 \label{eq:participation}
\end{equation}
This spectral participation ratio measures how broadly the initial state is
distributed over distinct energy eigenspaces and is widely used in
equilibration bounds
\cite{reimann2008foundation,linden2009quantum,short2011equilibration}.
For a nondegenerate Hamiltonian and
$\ket{E_0}=\sum_\alpha a_\alpha\ket{E_\alpha}$, it reduces to
$d_{\rm spec}=1/\sum_\alpha|a_\alpha|^4$.  The projector formulation avoids
dependence on an arbitrary basis chosen inside a degenerate eigenspace.
Neither $D_{\rm shell}$ nor $d_{\rm spec}$ is automatically the dimension
$D$ appearing in a Haar model for $W_{ji}(t)$.

For chaotic many-body systems, eigenstate thermalization and spectral
statistics offer reasons to expect equilibration of appropriate observables,
but they do not imply that a single finite-time orbit is Haar-random over an
entire shell~\cite{rigol2008thermalization,dalessio2016quantum}.  Conserved
charges, locality, finite propagation speed, integrability, many-body
localization, and weak conditional perturbations may all obstruct the Haar
model.  A defensible application therefore requires a statement about the
ensemble or time distribution of the relative unitaries $W_{ji}(t)$.
At a fixed time, fixed Hamiltonians and a fixed initial state determine
$W_{ji}(t)$ deterministically.  Every probability statement must therefore
identify its probability source, such as Hamiltonian disorder, a random
circuit ensemble, random initial states, a distribution of conditional
perturbations, or a random time sampled from a specified window.  A temporal
distribution along one orbit agrees with an ensemble distribution only under
an additional ergodicity or sampling assumption and generally contains
correlated samples.  For time-dependent $H_i(t)$, the exponentials in
Eq.~\eqref{eq:echo} must be replaced by the appropriate time-ordered
propagators.

Approximate unitary designs provide one possible intermediate assumption.
The mean-square identity in Lemma~\ref{lem:mixed} requires only the relevant
second moment, whereas a tail comparable to Eq.~\eqref{eq:tail} requires
stronger moment or concentration control.  Local random circuits are known to
approach approximate designs only after a depth that depends on locality,
system size, design order, and accuracy~\cite{brandao2016local}.  This makes
circuit depth or physical evolution time a meaningful replacement for an
unqualified appeal to Haar typicality.

\section{A local product mechanism}
\label{sec:local-product}

Many decoherence models have a more physical route to small overlaps than
global Haar randomness.  As an idealized mechanism, suppose that the
environment is initially factorized, residual interactions between fragments
can be neglected on the timescale of interest, and the conditional dynamics
preserves the product form
\begin{equation}
 \ket{E_i}=\bigotimes_{k=1}^{N}\ket{e_i^{(k)}}.
\end{equation}
Consequently,
\begin{equation}
\begin{split}
 \langle E_j|E_i\rangle
 =\prod_{k=1}^{N}\langle e_j^{(k)}|e_i^{(k)}\rangle,
 \\
 -\log|\langle E_j|E_i\rangle|
 =\sum_{k=1}^{N}-\log|\langle e_j^{(k)}|e_i^{(k)}\rangle|.
 \label{eq:product-overlap}
\end{split}
\end{equation}
The logarithmic identity assumes nonzero local overlaps; if one local overlap
vanishes, the global overlap vanishes already.
If a positive density of fragments has local overlaps bounded above by a
fixed $q<1$, or more generally if
\begin{equation}
 \liminf_{N\to\infty}\frac1N\sum_{k=1}^N
 \left[-\log|\langle e_j^{(k)}|e_i^{(k)}\rangle|\right]>0,
 \label{eq:positive-overlap-rate}
\end{equation}
Eq.~\eqref{eq:product-overlap} gives exponential suppression in $N$ without
requiring the global state to be Haar-random.  At the same time, the fragment
structure makes the question of redundant records well posed.
This local mechanism and the global geometric estimate are complementary,
not interchangeable.  Approximate rather than exact conditional
factorization produces model-dependent correction terms.

\section{Thermodynamic-limit comparison: infinite tensor products and sectorization}
\label{sec:infinite-tensor-products}

The finite-dimensional estimates above concern generic Haar-distributed
pairs, whose overlaps are nonzero almost surely but typically very small.
Finite-dimensional Hilbert spaces can of course contain exactly orthogonal
states, and a finite product overlap is exactly zero if even one corresponding
pair of local factors is orthogonal.  A related but conceptually distinct
phenomenon occurs when macroscopically different reference sequences define
disjoint representation sectors in an infinite tensor-product limit.

Here $N$ denotes the number of tensor factors, whereas $D$ in the preceding
sections denotes the dimension of the sector on which the random-state or
relative-unitary model is imposed; in many-body applications $D$ may itself
grow exponentially with $N$.  Consider two sequences of normalized local
states,
\begin{equation}
    \ket{E_i^{(N)}}=
    \bigotimes_{k=1}^{N}\ket{e_i^{(k)}},
    \qquad
    \ket{E_j^{(N)}}=
    \bigotimes_{k=1}^{N}\ket{e_j^{(k)}} .
\end{equation}
Their overlap factorizes as
\begin{equation}
    \braket{E_j^{(N)}}{E_i^{(N)}}
    =
    \prod_{k=1}^{N}
    \braket{e_j^{(k)}}{e_i^{(k)}} .
    \label{eq:finite-product-overlap}
\end{equation}
Let $n_N$ be the number of factors among the first $N$ for which
\begin{equation}
    \left|
    \braket{e_j^{(k)}}{e_i^{(k)}}
    \right|
    \leq q<1 ,
\end{equation}
and suppose that $n_N/N\to\alpha>0$.  Then the overlap vanishes exponentially,
\begin{align}
 \left|\braket{E_j^{(N)}}{E_i^{(N)}}\right|
 &\leq q^{n_N}=\exp[-n_N|\log q|]
 \nonumber\\
 &=\exp[-\alpha N|\log q|+o(N)]
 \longrightarrow0,
    \label{eq:thermodynamic-orthogonality}
\end{align}
Thus the sequence of finite-product overlaps tends to zero in the
thermodynamic limit.  This limiting statement alone does not yet establish
inequivalent representations: an infinite tensor-product representation and
an algebra of quasi-local observables must also be specified.

Von Neumann's construction of infinite tensor products makes this limit more
precise~\cite{vonNeumann1939}.  For two normalized reference sequences, a
standard sufficient criterion for their failure to belong to the same weak
equivalence class is
\begin{equation}
 \sum_{k=1}^{\infty}
 \left(1-|\langle e_j^{(k)}|e_i^{(k)}\rangle|\right)=\infty.
 \label{eq:sector-criterion}
\end{equation}
The positive-density condition above implies this divergence.  In the
standard infinite-product construction, such reference sequences can define
orthogonal incomplete tensor-product sectors and, under the corresponding
quasi-local formulation, disjoint or unitarily inequivalent representations
of the observable algebra.  Here ``unitarily inequivalent'' concerns
representations of the physical observable algebra, not the absence of an
abstract isomorphism between Hilbert spaces of the same dimension.

This distinction is important.  At every finite $N$, all product states are
vectors in the same finite-dimensional Hilbert space and can be related by
finite-system unitaries.  Different thermodynamic-limit states may instead
define disjoint representations of the quasi-local algebra.
The latter conclusion is representation dependent and does not follow from
finite-$N$ geometry alone.  It is one mathematical framework used in
thermodynamic-limit approaches to measurement
theory~\cite{hepp-1972,van-den-bossche-2023-a,svozil-2025-u}.

For infinitely many nontrivial finite-dimensional factors, the complete
infinite tensor product is generally nonseparable and decomposes into
mutually orthogonal sectors, while an individual incomplete tensor-product
sector can remain separable.  The complete product may carry continuum many
sector labels.  Writing this cardinal specifically as $\aleph_1$ assumes the
continuum hypothesis.  The physically relevant distinction is the emergence
of disjoint representations of the quasi-local observable algebra, not the
cardinality label by itself.

This infinite-system mechanism may therefore be viewed as an idealized exact
counterpart of the finite-dimensional mechanisms discussed above:
high-dimensional Haar sectors and finite product environments satisfying
Eq.~\eqref{eq:positive-overlap-rate} can yield parametrically small overlaps,
with exponential suppression in $N$ in the product case, whereas a specified
infinite-product construction can yield disjoint sectors.  It does not, by
itself, establish that a physical environment is literally infinite, select
one sector as the unique realized outcome, or derive the Born probabilities.
It does permit macroscopically distinct phases or outcomes to be represented
in inequivalent sectors rather than forcing them into a single irreducible
representation.

\section{Scope of the geometric statement}
\label{sec:scope}

The geometric calculation establishes the following conditional implication:
if a specified finite family of environmental branches is distributed like
independent typical vectors in a $D$-dimensional sector, then its largest
pairwise overlap and the operational distance of the reduced system from its
dephased state are small with high probability.  It also shows that the
projective Hilbert space contains exponentially large pairwise
quasi-orthogonal codes.  When $D=q^N$, composition first creates an
exponentially large ambient dimension and quasi-orthogonal packing then
creates a doubly exponential number of directions at fixed overlap
tolerance.  Simultaneously, concentration compresses most pairwise angular
differences into a narrow neighborhood of $\pi/2$.

It does not establish any of the following claims.
First, it does not select a pointer basis; this depends on the structure and
timescales of the system--environment interaction.  Second, it does not prove
that a physical Hamiltonian generates Haar-typical relative branches.  Third,
it does not turn an improper mixture into a proper ensemble or select a unique
outcome.  Fourth, it does not equate small system coherence with a readable,
let alone redundant, environmental record.  Finally, exponential spherical
packing does not provide exponentially many perfectly decodable classical
labels.

The operational content relevant to Schr\"odinger's ``jellification'' worry is
therefore limited but precise.  For every POVM effect $0\leq M\leq I$,
\begin{equation}
 \left|\Tr\!\left[M\bigl(\rho_S-\Delta(\rho_S)\bigr)\right]\right|
 \leq D_{\rm tr}\!\left(\rho_S,\Delta(\rho_S)\right).
 \label{eq:povm-operational}
\end{equation}
The supremum of the left-hand side over all effects equals the trace distance.
For a Hermitian observable $A$ with $\|A\|_\infty\leq1$, the corresponding
expectation-value difference is bounded by $2D_{\rm tr}$.  Propositions
\ref{prop:simultaneous} and \ref{prop:mixed-simultaneous} therefore directly
bound changes in system measurement probabilities under their respective
typicality assumptions.  They do not assert that the global coherent
superposition has vanished.

\section{Further research}

A first concrete test is to replace the Haar assumption by conditional local
Hamiltonians $H_i=H_E+V_i$.  One can compare chaotic, integrable, and localized
spin environments by measuring the decay, long-time plateau, temporal
variance, and full distribution of Eq.~\eqref{eq:echo}.  Histograms of
$|\gamma_{ij}(t)|^2$ can be compared directly with the
$\operatorname{Beta}(1,D-1)$ law in
Lemma~\ref{lem:haar-overlap}, but the sampling prescription must be stated.
An ensemble histogram and a time histogram along one correlated orbit are not
interchangeable without an ergodicity and sampling analysis.  Their deviation
from the beta law is itself a measure of the failure of pair typicality.

A second problem is finite-time design formation.  For conditional random
circuits or noisy Floquet systems, one can seek explicit bounds of the form
\begin{equation}
 \PP\{|\Tr(\rho_EW_{ji}(t))|^2\geq\eps\}
 \leq f(D,t,\eps,\Tr\rho_E^2),
\end{equation}
interpolating between short-time perturbative behavior and a random-matrix
plateau.  Here $\PP$ must refer to a specified circuit, disorder,
initial-state, perturbation, or random-time ensemble.  Locality should appear
explicitly through circuit depth or a Lieb--Robinson scale.

A third direction is the many-branch problem.  Rather than controlling only
the largest entry, one can study the spectrum of the Gram matrix
$G_{ij}=\langle E_j|E_i\rangle$ and derive bounds tailored to the coefficient
vector $(c_i)$.  This would sharpen Eq.~\eqref{eq:trace-bound-general} and
identify when numerous weak coherences collectively remain observable.

A related geometric problem is to replace Haar measure by physically
generated ensembles while retaining a concentration principle.  One may ask
whether the relevant conditional circuit or Hamiltonian ensemble satisfies a
logarithmic Sobolev inequality, a spectral-gap estimate, or an approximate
design condition strong enough to reproduce a L\'evy-type tail for selected
Lipschitz observables.  On the finite-family side, one can compare the Gram
matrices of dynamically generated branches with Johnson--Lindenstrauss-type
near-isometries.  Such results would connect the general concentration and
embedding principles to realistic, non-Haar dynamics rather than merely
postulating pair typicality.

Finally, for mixed environments one should analyze two quantities in parallel:
the decoherence factor $\Tr(\rho_EW_{ji})$ and the distinguishability or
accessible information of the conditional fragment states
$\rho_F^{(i)}$.  Their separation is necessary for determining when geometric
decoherence is accompanied by objective, redundantly accessible records.

\section{Conclusion}

The central geometric effect is a combination of abundance and
featurelessness.  Tensor composition gives $D=q^N$, exponentially many
linear dimensions in the number of factors.  Relaxing exact orthogonality to
a fixed overlap tolerance then permits $\exp(c_\eps D)$ directions, hence a
doubly exponential number in $N$.  Yet typical overlaps are only
$O(D^{-1/2})$ in amplitude, so almost all projective angles lie within
$O(D^{-1/2})$ of $\pi/2$.  Dimensional capacity therefore explodes while
typical angular structure becomes increasingly homogeneous.

L\'evy's lemma supplies the general reason: regular observables concentrate
on the high-dimensional unit sphere, and the overlap observable concentrates
about the small mean $1/D$.  The exact beta law sharpens this general result
at the natural overlap scale.  Johnson--Lindenstrauss supplies the
complementary finite-configuration statement: logarithmic dimension suffices
to retain the pairwise geometry of exponentially many points.  The direct
Haar packing calculation joins these two perspectives for complex projective
Hilbert space.

This geometry supplies a useful but conditional component of decoherence.
Independent Haar-random pure states have the exact overlap tail
$(1-\eps)^{D-1}$.  For a finite family, this tail yields a
high-probability bound on both the largest branch overlap and the trace
distance between the reduced system and its pointer-basis dephasing.  For a
mixed initial environment, Haar-random relative dynamics instead gives the
mean-square coherence $\Tr(\rho_E^2)/D$ and the simultaneous bounds of
Proposition~\ref{prop:mixed-simultaneous}.  The Hilbert--Schmidt estimate
\eqref{eq:trace-bound-hs} additionally controls the collective contribution
of many weak coherences.  It provides a complementary collective bound and
can be substantially sharper than the entrywise estimate when many weak
coherences are broadly distributed, although it is not uniformly sharper.

The packing number is not an additional Hilbert-space dimension, and angular
homogenization is not literal destruction of global information.  These
results quantify geometric capacity, not dynamics.  The physical question is
whether conditional relative unitaries generated by a particular local
Hamiltonian acquire the necessary typicality on the relevant timescale.
Moreover, loss of system coherence, environmental distinguishability, and
redundant record formation remain separate notions.  With these distinctions
made explicit, high-dimensional geometry provides a clean quantitative bound
on locally accessible interference without being asked to solve the preferred
basis, definite-outcome, or objectivity problems by itself.

\begin{acknowledgments}
This research was funded in whole or in part by the \textit{Austrian Science
Fund (FWF)} [Grant
\href{https://doi.org/10.55776/PIN5424624}{DOI: 10.55776/PIN5424624}].
The author acknowledges TU Wien Bibliothek for financial support through its
Open Access Funding Programme.

OpenAI Codex (GPT-5.6) was used to assist with manuscript criticism, literature
organization, checking derivations, and editorial revision.  The author
directed its use, independently verified the mathematical arguments and cited
sources, revised the resulting text, and assumes full responsibility for the
manuscript.
\end{acknowledgments}

\bibliography{svozil}

%apsrev4-2.bst 2019-01-14 (MD) hand-edited version of apsrev4-1.bst
%Control: key (0)
%Control: author (8) initials jnrlst
%Control: editor formatted (1) identically to author
%Control: production of article title (0) allowed
%Control: page (0) single
%Control: year (1) truncated
%Control: production of eprint (0) enabled
\begin{thebibliography}{35}%
\makeatletter
\providecommand \@ifxundefined [1]{%
 \@ifx{#1\undefined}
}%
\providecommand \@ifnum [1]{%
 \ifnum #1\expandafter \@firstoftwo
 \else \expandafter \@secondoftwo
 \fi
}%
\providecommand \@ifx [1]{%
 \ifx #1\expandafter \@firstoftwo
 \else \expandafter \@secondoftwo
 \fi
}%
\providecommand \natexlab [1]{#1}%
\providecommand \enquote  [1]{``#1''}%
\providecommand \bibnamefont  [1]{#1}%
\providecommand \bibfnamefont [1]{#1}%
\providecommand \citenamefont [1]{#1}%
\providecommand \href@noop [0]{\@secondoftwo}%
\providecommand \href [0]{\begingroup \@sanitize@url \@href}%
\providecommand \@href[1]{\@@startlink{#1}\@@href}%
\providecommand \@@href[1]{\endgroup#1\@@endlink}%
\providecommand \@sanitize@url [0]{\catcode `\\12\catcode `\$12\catcode
  `\&12\catcode `\#12\catcode `\^12\catcode `\_12\catcode `\%12\relax}%
\providecommand \@@startlink[1]{}%
\providecommand \@@endlink[0]{}%
\providecommand \url  [0]{\begingroup\@sanitize@url \@url }%
\providecommand \@url [1]{\endgroup\@href {#1}{\urlprefix }}%
\providecommand \urlprefix  [0]{URL }%
\providecommand \Eprint [0]{\href }%
\providecommand \doibase [0]{https://doi.org/}%
\providecommand \selectlanguage [0]{\@gobble}%
\providecommand \bibinfo  [0]{\@secondoftwo}%
\providecommand \bibfield  [0]{\@secondoftwo}%
\providecommand \translation [1]{[#1]}%
\providecommand \BibitemOpen [0]{}%
\providecommand \bibitemStop [0]{}%
\providecommand \bibitemNoStop [0]{.\EOS\space}%
\providecommand \EOS [0]{\spacefactor3000\relax}%
\providecommand \BibitemShut  [1]{\csname bibitem#1\endcsname}%
\let\auto@bib@innerbib\@empty
%</preamble>
\bibitem [{\citenamefont {Gleason}(1957)}]{gleason}%
  \BibitemOpen
  \bibfield  {author} {\bibinfo {author} {\bibfnamefont {A.~M.}\ \bibnamefont
  {Gleason}},\ }\bibfield  {title} {\bibinfo {title} {Measures on the closed
  subspaces of a {H}ilbert space},\ }\href
  {https://doi.org/10.1512/iumj.1957.6.56050} {\bibfield  {journal} {\bibinfo
  {journal} {Journal of Mathematics and Mechanics (now Indiana University
  Mathematics Journal)}\ }\textbf {\bibinfo {volume} {6}},\ \bibinfo {pages}
  {885} (\bibinfo {year} {1957})}\BibitemShut {NoStop}%
\bibitem [{\citenamefont {Wright}\ and\ \citenamefont
  {Weigert}(2019)}]{Wright_2019}%
  \BibitemOpen
  \bibfield  {author} {\bibinfo {author} {\bibfnamefont {V.~J.}\ \bibnamefont
  {Wright}}\ and\ \bibinfo {author} {\bibfnamefont {S.}~\bibnamefont
  {Weigert}},\ }\bibfield  {title} {\bibinfo {title} {A gleason-type theorem
  for qubits based on mixtures of projective measurements},\ }\href
  {https://doi.org/10.1088/1751-8121/aaf93d} {\bibfield  {journal} {\bibinfo
  {journal} {Journal of Physics A: Mathematical and Theoretical}\ }\textbf
  {\bibinfo {volume} {52}},\ \bibinfo {pages} {055301} (\bibinfo {year}
  {2019})}\BibitemShut {NoStop}%
\bibitem [{\citenamefont {{Everett III}}(1957)}]{everett}%
  \BibitemOpen
  \bibfield  {author} {\bibinfo {author} {\bibfnamefont {H.}~\bibnamefont
  {{Everett III}}},\ }\bibfield  {title} {\bibinfo {title} {`{R}elative
  {S}tate' formulation of quantum mechanics},\ }\href
  {https://doi.org/10.1103/RevModPhys.29.454} {\bibfield  {journal} {\bibinfo
  {journal} {Reviews of Modern Physics}\ }\textbf {\bibinfo {volume} {29}},\
  \bibinfo {pages} {454} (\bibinfo {year} {1957})}\BibitemShut {NoStop}%
\bibitem [{\citenamefont {Schlosshauer}(2007)}]{schlosshauer-2007}%
  \BibitemOpen
  \bibfield  {author} {\bibinfo {author} {\bibfnamefont {M.}~\bibnamefont
  {Schlosshauer}},\ }\href {https://doi.org/10.1007/978-3-540-35775-9} {\emph
  {\bibinfo {title} {Decoherence and the Quantum-To-Classical Transition}}},\
  The Frontiers Collection\ (\bibinfo  {publisher} {Springer Verlag},\ \bibinfo
  {address} {Berlin, Heidelberg, Germany},\ \bibinfo {year} {2007})\BibitemShut
  {NoStop}%
\bibitem [{\citenamefont {Schlosshauer}(2019)}]{schlosshauer-2019}%
  \BibitemOpen
  \bibfield  {author} {\bibinfo {author} {\bibfnamefont {M.}~\bibnamefont
  {Schlosshauer}},\ }\bibfield  {title} {\bibinfo {title} {Quantum
  decoherence},\ }\href {https://doi.org/10.1016/j.physrep.2019.10.001}
  {\bibfield  {journal} {\bibinfo  {journal} {Physics Reports}\ }\textbf
  {\bibinfo {volume} {831}},\ \bibinfo {pages} {1} (\bibinfo {year} {2019})},\
  \Eprint {https://arxiv.org/abs/arXiv:1911.06282} {arXiv:1911.06282}
  \BibitemShut {NoStop}%
\bibitem [{\citenamefont {Schr{\"{o}}dinger}(1935)}]{schrodinger-gwsidqm1}%
  \BibitemOpen
  \bibfield  {author} {\bibinfo {author} {\bibfnamefont {E.}~\bibnamefont
  {Schr{\"{o}}dinger}},\ }\bibfield  {title} {\bibinfo {title} {{D}ie
  gegenw\"artige {S}ituation in der {Q}uantenmechanik},\ }\href
  {https://doi.org/10.1007/BF01491891} {\bibfield  {journal} {\bibinfo
  {journal} {Naturwissenschaften}\ }\textbf {\bibinfo {volume} {23}},\ \bibinfo
  {pages} {807} (\bibinfo {year} {1935})}\BibitemShut {NoStop}%
\bibitem [{\citenamefont
  {Schr{\"{o}}dinger}(1995)}]{schroedinger-interpretation}%
  \BibitemOpen
  \bibfield  {author} {\bibinfo {author} {\bibfnamefont {E.}~\bibnamefont
  {Schr{\"{o}}dinger}},\ }\href@noop {} {\emph {\bibinfo {title} {The
  Interpretation of Quantum Mechanics. {D}ublin Seminars (1949-1955) and Other
  Unpublished Essays}}}\ (\bibinfo  {publisher} {Ox Bow Press},\ \bibinfo
  {address} {Woodbridge, Connecticut, USA},\ \bibinfo {year} {1995})\ \bibinfo
  {note} {edited by Michel Bitbol}\BibitemShut {NoStop}%
\bibitem [{\citenamefont {Joos}\ and\ \citenamefont
  {Zeh}(1985)}]{joos1985emergence}%
  \BibitemOpen
  \bibfield  {author} {\bibinfo {author} {\bibfnamefont {E.}~\bibnamefont
  {Joos}}\ and\ \bibinfo {author} {\bibfnamefont {H.~D.}\ \bibnamefont {Zeh}},\
  }\bibfield  {title} {\bibinfo {title} {The emergence of classical properties
  through interaction with the environment},\ }\href
  {https://doi.org/10.1007/BF01725541} {\bibfield  {journal} {\bibinfo
  {journal} {Zeitschrift f{\"u}r Physik B Condensed Matter}\ }\textbf {\bibinfo
  {volume} {59}},\ \bibinfo {pages} {223} (\bibinfo {year} {1985})}\BibitemShut
  {NoStop}%
\bibitem [{\citenamefont {Zurek}(2003)}]{RevModPhys.75.715}%
  \BibitemOpen
  \bibfield  {author} {\bibinfo {author} {\bibfnamefont {W.~H.}\ \bibnamefont
  {Zurek}},\ }\bibfield  {title} {\bibinfo {title} {Decoherence, einselection,
  and the quantum origins of the classical},\ }\href
  {https://doi.org/10.1103/RevModPhys.75.715} {\bibfield  {journal} {\bibinfo
  {journal} {Reviews of Modern Physics}\ }\textbf {\bibinfo {volume} {75}},\
  \bibinfo {pages} {715} (\bibinfo {year} {2003})}\BibitemShut {NoStop}%
\bibitem [{\citenamefont {Joos}\ \emph {et~al.}(2003)\citenamefont {Joos},
  \citenamefont {Zeh}, \citenamefont {Kiefer}, \citenamefont {Giulini},
  \citenamefont {Kupsch},\ and\ \citenamefont
  {Stamatescu}}]{joos2003decoherence}%
  \BibitemOpen
  \bibfield  {author} {\bibinfo {author} {\bibfnamefont {E.}~\bibnamefont
  {Joos}}, \bibinfo {author} {\bibfnamefont {H.~D.}\ \bibnamefont {Zeh}},
  \bibinfo {author} {\bibfnamefont {C.}~\bibnamefont {Kiefer}}, \bibinfo
  {author} {\bibfnamefont {D.}~\bibnamefont {Giulini}}, \bibinfo {author}
  {\bibfnamefont {J.}~\bibnamefont {Kupsch}},\ and\ \bibinfo {author}
  {\bibfnamefont {I.-O.}\ \bibnamefont {Stamatescu}},\ }\href
  {https://doi.org/10.1007/978-3-662-05328-7} {\emph {\bibinfo {title}
  {Decoherence and the Appearance of a Classical World in Quantum Theory}}},\
  \bibinfo {edition} {2nd}\ ed.\ (\bibinfo  {publisher} {Springer},\ \bibinfo
  {address} {Berlin, Heidelberg},\ \bibinfo {year} {2003})\BibitemShut
  {NoStop}%
\bibitem [{\citenamefont {Tao}(2012)}]{tao2012topics}%
  \BibitemOpen
  \bibfield  {author} {\bibinfo {author} {\bibfnamefont {T.}~\bibnamefont
  {Tao}},\ }\href {https://doi.org/10.1090/gsm/132} {\emph {\bibinfo {title}
  {Topics in random matrix theory}}},\ \bibinfo {series} {Graduate Studies in
  Mathematics}, Vol.\ \bibinfo {volume} {132}\ (\bibinfo  {publisher} {American
  Mathematical Society},\ \bibinfo {year} {2012})\BibitemShut {NoStop}%
\bibitem [{\citenamefont {Vershynin}(2018)}]{vershynin2018high}%
  \BibitemOpen
  \bibfield  {author} {\bibinfo {author} {\bibfnamefont {R.}~\bibnamefont
  {Vershynin}},\ }\href {https://doi.org/10.1017/9781108231596} {\emph
  {\bibinfo {title} {High-Dimensional Probability: An Introduction with
  Applications in Data Science}}}\ (\bibinfo  {publisher} {Cambridge University
  Press},\ \bibinfo {year} {2018})\BibitemShut {NoStop}%
\bibitem [{\citenamefont {Ledoux}(2001)}]{ledoux2001concentration}%
  \BibitemOpen
  \bibfield  {author} {\bibinfo {author} {\bibfnamefont {M.}~\bibnamefont
  {Ledoux}},\ }\href {https://doi.org/10.1090/surv/089} {\emph {\bibinfo
  {title} {The Concentration of Measure Phenomenon}}}\ (\bibinfo  {publisher}
  {American Mathematical Society},\ \bibinfo {year} {2001})\BibitemShut
  {NoStop}%
\bibitem [{\citenamefont {Milman}\ and\ \citenamefont
  {Schechtman}(1986)}]{milman1986asymptotic}%
  \BibitemOpen
  \bibfield  {author} {\bibinfo {author} {\bibfnamefont {V.~D.}\ \bibnamefont
  {Milman}}\ and\ \bibinfo {author} {\bibfnamefont {G.}~\bibnamefont
  {Schechtman}},\ }\href {https://doi.org/10.1007/BFb0074900} {\emph {\bibinfo
  {title} {Asymptotic Theory of Finite Dimensional Normed Spaces}}},\ \bibinfo
  {series} {Lecture Notes in Mathematics}, Vol.\ \bibinfo {volume} {1200}\
  (\bibinfo  {publisher} {Springer},\ \bibinfo {address} {Berlin, Heidelberg},\
  \bibinfo {year} {1986})\BibitemShut {NoStop}%
\bibitem [{\citenamefont {Hayden}\ \emph {et~al.}(2006)\citenamefont {Hayden},
  \citenamefont {Leung},\ and\ \citenamefont {Winter}}]{Hayden2006}%
  \BibitemOpen
  \bibfield  {author} {\bibinfo {author} {\bibfnamefont {P.}~\bibnamefont
  {Hayden}}, \bibinfo {author} {\bibfnamefont {D.~W.}\ \bibnamefont {Leung}},\
  and\ \bibinfo {author} {\bibfnamefont {A.}~\bibnamefont {Winter}},\
  }\bibfield  {title} {\bibinfo {title} {Aspects of generic entanglement},\
  }\href {https://doi.org/10.1007/s00220-006-1535-6} {\bibfield  {journal}
  {\bibinfo  {journal} {Communications in Mathematical Physics}\ }\textbf
  {\bibinfo {volume} {265}},\ \bibinfo {pages} {95} (\bibinfo {year} {2006})},\
  \Eprint {https://arxiv.org/abs/quant-ph/0407049} {arXiv:quant-ph/0407049}
  \BibitemShut {NoStop}%
\bibitem [{\citenamefont {Johnson}\ and\ \citenamefont
  {Lindenstrauss}(1984)}]{johnson-1984}%
  \BibitemOpen
  \bibfield  {author} {\bibinfo {author} {\bibfnamefont {W.~B.}\ \bibnamefont
  {Johnson}}\ and\ \bibinfo {author} {\bibfnamefont {J.}~\bibnamefont
  {Lindenstrauss}},\ }\bibinfo {title} {Extensions of {L}ipschitz mappings into
  a {H}ilbert space},\ in\ \href {https://doi.org/10.1090/conm/026/737400}
  {\emph {\bibinfo {booktitle} {Conference on Modern Analysis and
  Probability}}},\ \bibinfo {series} {Contemporary Mathematics}, Vol.~\bibinfo
  {volume} {26}\ (\bibinfo  {publisher} {American Mathematical Society},\
  \bibinfo {year} {1984})\ pp.\ \bibinfo {pages} {189--206}\BibitemShut
  {NoStop}%
\bibitem [{\citenamefont {Welch}(1974)}]{welch1974coherence}%
  \BibitemOpen
  \bibfield  {author} {\bibinfo {author} {\bibfnamefont {L.~R.}\ \bibnamefont
  {Welch}},\ }\bibfield  {title} {\bibinfo {title} {Lower bounds on the maximum
  cross correlation of signals},\ }\href
  {https://doi.org/10.1109/TIT.1974.1055219} {\bibfield  {journal} {\bibinfo
  {journal} {IEEE Transactions on Information Theory}\ }\textbf {\bibinfo
  {volume} {20}},\ \bibinfo {pages} {397} (\bibinfo {year} {1974})}\BibitemShut
  {NoStop}%
\bibitem [{\citenamefont {Holevo}(1973)}]{Holevo1973}%
  \BibitemOpen
  \bibfield  {author} {\bibinfo {author} {\bibfnamefont {A.~S.}\ \bibnamefont
  {Holevo}},\ }\bibfield  {title} {\bibinfo {title} {Bounds for the quantity of
  information transmitted by a quantum communication channel},\ }\href@noop {}
  {\bibfield  {journal} {\bibinfo  {journal} {Problemy Peredachi Informatsii}\
  }\textbf {\bibinfo {volume} {9}},\ \bibinfo {pages} {3} (\bibinfo {year}
  {1973})},\ \bibinfo {note} {english translation in \textit{Problems of
  Information Transmission} 9, 177--183 (1973)}\BibitemShut {NoStop}%
\bibitem [{\citenamefont {Nielsen}\ and\ \citenamefont
  {Chuang}(2010)}]{nielsen-book10}%
  \BibitemOpen
  \bibfield  {author} {\bibinfo {author} {\bibfnamefont {M.~A.}\ \bibnamefont
  {Nielsen}}\ and\ \bibinfo {author} {\bibfnamefont {I.~L.}\ \bibnamefont
  {Chuang}},\ }\href {https://doi.org/10.1017/CBO9780511976667} {\emph
  {\bibinfo {title} {Quantum Computation and Quantum Information}}}\ (\bibinfo
  {publisher} {Cambridge University Press},\ \bibinfo {address} {Cambridge},\
  \bibinfo {year} {2010})\ \bibinfo {note} {10th Anniversary
  Edition}\BibitemShut {NoStop}%
\bibitem [{\citenamefont {Blume-Kohout}\ and\ \citenamefont
  {Zurek}(2006)}]{blumekohout2006darwinism}%
  \BibitemOpen
  \bibfield  {author} {\bibinfo {author} {\bibfnamefont {R.}~\bibnamefont
  {Blume-Kohout}}\ and\ \bibinfo {author} {\bibfnamefont {W.~H.}\ \bibnamefont
  {Zurek}},\ }\bibfield  {title} {\bibinfo {title} {Quantum {Darwinism}:
  Entanglement, branches, and the emergent classicality of redundantly stored
  quantum information},\ }\href {https://doi.org/10.1103/PhysRevA.73.062310}
  {\bibfield  {journal} {\bibinfo  {journal} {Physical Review A}\ }\textbf
  {\bibinfo {volume} {73}},\ \bibinfo {pages} {062310} (\bibinfo {year}
  {2006})},\ \Eprint {https://arxiv.org/abs/quant-ph/0505031}
  {arXiv:quant-ph/0505031} \BibitemShut {NoStop}%
\bibitem [{\citenamefont {Zwolak}\ \emph {et~al.}(2010)\citenamefont {Zwolak},
  \citenamefont {Quan},\ and\ \citenamefont {Zurek}}]{zwolak2010nonideal}%
  \BibitemOpen
  \bibfield  {author} {\bibinfo {author} {\bibfnamefont {M.}~\bibnamefont
  {Zwolak}}, \bibinfo {author} {\bibfnamefont {H.~T.}\ \bibnamefont {Quan}},\
  and\ \bibinfo {author} {\bibfnamefont {W.~H.}\ \bibnamefont {Zurek}},\
  }\bibfield  {title} {\bibinfo {title} {Quantum {Darwinism} in nonideal
  environments},\ }\href {https://doi.org/10.1103/PhysRevA.81.062110}
  {\bibfield  {journal} {\bibinfo  {journal} {Physical Review A}\ }\textbf
  {\bibinfo {volume} {81}},\ \bibinfo {pages} {062110} (\bibinfo {year}
  {2010})},\ \Eprint {https://arxiv.org/abs/0911.4307} {arXiv:0911.4307
  [quant-ph]} \BibitemShut {NoStop}%
\bibitem [{\citenamefont {Riedel}\ \emph {et~al.}(2016)\citenamefont {Riedel},
  \citenamefont {Zurek},\ and\ \citenamefont {Zwolak}}]{riedel2016objective}%
  \BibitemOpen
  \bibfield  {author} {\bibinfo {author} {\bibfnamefont {C.~J.}\ \bibnamefont
  {Riedel}}, \bibinfo {author} {\bibfnamefont {W.~H.}\ \bibnamefont {Zurek}},\
  and\ \bibinfo {author} {\bibfnamefont {M.}~\bibnamefont {Zwolak}},\
  }\bibfield  {title} {\bibinfo {title} {Objective past of a quantum universe:
  Redundant records of consistent histories},\ }\href
  {https://doi.org/10.1103/PhysRevA.93.032126} {\bibfield  {journal} {\bibinfo
  {journal} {Physical Review A}\ }\textbf {\bibinfo {volume} {93}},\ \bibinfo
  {pages} {032126} (\bibinfo {year} {2016})},\ \Eprint
  {https://arxiv.org/abs/1312.0331} {arXiv:1312.0331 [quant-ph]} \BibitemShut
  {NoStop}%
\bibitem [{\citenamefont {Garc{\'i}a-P{\'e}rez}\ \emph
  {et~al.}(2020)\citenamefont {Garc{\'i}a-P{\'e}rez}, \citenamefont {Chisholm},
  \citenamefont {Rossi}, \citenamefont {Palma},\ and\ \citenamefont
  {Maniscalco}}]{garciaperez2020decoherence}%
  \BibitemOpen
  \bibfield  {author} {\bibinfo {author} {\bibfnamefont {G.}~\bibnamefont
  {Garc{\'i}a-P{\'e}rez}}, \bibinfo {author} {\bibfnamefont {D.~A.}\
  \bibnamefont {Chisholm}}, \bibinfo {author} {\bibfnamefont {M.~A.~C.}\
  \bibnamefont {Rossi}}, \bibinfo {author} {\bibfnamefont {G.~M.}\ \bibnamefont
  {Palma}},\ and\ \bibinfo {author} {\bibfnamefont {S.}~\bibnamefont
  {Maniscalco}},\ }\bibfield  {title} {\bibinfo {title} {Decoherence without
  entanglement and quantum {Darwinism}},\ }\href
  {https://doi.org/10.1103/PhysRevResearch.2.012061} {\bibfield  {journal}
  {\bibinfo  {journal} {Physical Review Research}\ }\textbf {\bibinfo {volume}
  {2}},\ \bibinfo {pages} {012061(R)} (\bibinfo {year} {2020})},\ \Eprint
  {https://arxiv.org/abs/1907.12447} {arXiv:1907.12447 [quant-ph]} \BibitemShut
  {NoStop}%
\bibitem [{\citenamefont {Cucchietti}\ \emph {et~al.}(2003)\citenamefont
  {Cucchietti}, \citenamefont {Dalvit}, \citenamefont {Paz},\ and\
  \citenamefont {Zurek}}]{cucchietti2003echo}%
  \BibitemOpen
  \bibfield  {author} {\bibinfo {author} {\bibfnamefont {F.~M.}\ \bibnamefont
  {Cucchietti}}, \bibinfo {author} {\bibfnamefont {D.~A.~R.}\ \bibnamefont
  {Dalvit}}, \bibinfo {author} {\bibfnamefont {J.~P.}\ \bibnamefont {Paz}},\
  and\ \bibinfo {author} {\bibfnamefont {W.~H.}\ \bibnamefont {Zurek}},\
  }\bibfield  {title} {\bibinfo {title} {Decoherence and the {Loschmidt}
  echo},\ }\href {https://doi.org/10.1103/PhysRevLett.91.210403} {\bibfield
  {journal} {\bibinfo  {journal} {Physical Review Letters}\ }\textbf {\bibinfo
  {volume} {91}},\ \bibinfo {pages} {210403} (\bibinfo {year} {2003})},\
  \Eprint {https://arxiv.org/abs/quant-ph/0306142} {arXiv:quant-ph/0306142}
  \BibitemShut {NoStop}%
\bibitem [{\citenamefont {Gorin}\ \emph {et~al.}(2004)\citenamefont {Gorin},
  \citenamefont {Prosen}, \citenamefont {Seligman},\ and\ \citenamefont
  {Strunz}}]{gorin2004echo}%
  \BibitemOpen
  \bibfield  {author} {\bibinfo {author} {\bibfnamefont {T.}~\bibnamefont
  {Gorin}}, \bibinfo {author} {\bibfnamefont {T.}~\bibnamefont {Prosen}},
  \bibinfo {author} {\bibfnamefont {T.~H.}\ \bibnamefont {Seligman}},\ and\
  \bibinfo {author} {\bibfnamefont {W.~T.}\ \bibnamefont {Strunz}},\ }\bibfield
   {title} {\bibinfo {title} {Decoherence alias {Loschmidt} echo of the
  environment},\ }\href {https://doi.org/10.1103/PhysRevA.70.042105} {\bibfield
   {journal} {\bibinfo  {journal} {Physical Review A}\ }\textbf {\bibinfo
  {volume} {70}},\ \bibinfo {pages} {042105} (\bibinfo {year} {2004})},\
  \Eprint {https://arxiv.org/abs/quant-ph/0405011} {arXiv:quant-ph/0405011}
  \BibitemShut {NoStop}%
\bibitem [{\citenamefont {Reimann}(2008)}]{reimann2008foundation}%
  \BibitemOpen
  \bibfield  {author} {\bibinfo {author} {\bibfnamefont {P.}~\bibnamefont
  {Reimann}},\ }\bibfield  {title} {\bibinfo {title} {Foundation of statistical
  mechanics under experimentally realistic conditions},\ }\href
  {https://doi.org/10.1103/PhysRevLett.101.190403} {\bibfield  {journal}
  {\bibinfo  {journal} {Physical Review Letters}\ }\textbf {\bibinfo {volume}
  {101}},\ \bibinfo {pages} {190403} (\bibinfo {year} {2008})}\BibitemShut
  {NoStop}%
\bibitem [{\citenamefont {Linden}\ \emph {et~al.}(2009)\citenamefont {Linden},
  \citenamefont {Popescu}, \citenamefont {Short},\ and\ \citenamefont
  {Winter}}]{linden2009quantum}%
  \BibitemOpen
  \bibfield  {author} {\bibinfo {author} {\bibfnamefont {N.}~\bibnamefont
  {Linden}}, \bibinfo {author} {\bibfnamefont {S.}~\bibnamefont {Popescu}},
  \bibinfo {author} {\bibfnamefont {A.~J.}\ \bibnamefont {Short}},\ and\
  \bibinfo {author} {\bibfnamefont {A.}~\bibnamefont {Winter}},\ }\bibfield
  {title} {\bibinfo {title} {Quantum mechanical evolution towards thermal
  equilibrium},\ }\href {https://doi.org/10.1103/PhysRevE.79.061103} {\bibfield
   {journal} {\bibinfo  {journal} {Physical Review E}\ }\textbf {\bibinfo
  {volume} {79}},\ \bibinfo {pages} {061103} (\bibinfo {year}
  {2009})}\BibitemShut {NoStop}%
\bibitem [{\citenamefont {Short}(2011)}]{short2011equilibration}%
  \BibitemOpen
  \bibfield  {author} {\bibinfo {author} {\bibfnamefont {A.~J.}\ \bibnamefont
  {Short}},\ }\bibfield  {title} {\bibinfo {title} {Equilibration of quantum
  systems and subsystems},\ }\href
  {https://doi.org/10.1088/1367-2630/13/5/053009} {\bibfield  {journal}
  {\bibinfo  {journal} {New Journal of Physics}\ }\textbf {\bibinfo {volume}
  {13}},\ \bibinfo {pages} {053009} (\bibinfo {year} {2011})},\ \Eprint
  {https://arxiv.org/abs/1012.4622} {arXiv:1012.4622 [quant-ph]} \BibitemShut
  {NoStop}%
\bibitem [{\citenamefont {Rigol}\ \emph {et~al.}(2008)\citenamefont {Rigol},
  \citenamefont {Dunjko},\ and\ \citenamefont
  {Olshanii}}]{rigol2008thermalization}%
  \BibitemOpen
  \bibfield  {author} {\bibinfo {author} {\bibfnamefont {M.}~\bibnamefont
  {Rigol}}, \bibinfo {author} {\bibfnamefont {V.}~\bibnamefont {Dunjko}},\ and\
  \bibinfo {author} {\bibfnamefont {M.}~\bibnamefont {Olshanii}},\ }\bibfield
  {title} {\bibinfo {title} {Thermalization and its mechanism for generic
  isolated quantum systems},\ }\href {https://doi.org/10.1038/nature06838}
  {\bibfield  {journal} {\bibinfo  {journal} {Nature}\ }\textbf {\bibinfo
  {volume} {452}},\ \bibinfo {pages} {854} (\bibinfo {year}
  {2008})}\BibitemShut {NoStop}%
\bibitem [{\citenamefont {D'Alessio}\ \emph {et~al.}(2016)\citenamefont
  {D'Alessio}, \citenamefont {Kafri}, \citenamefont {Polkovnikov},\ and\
  \citenamefont {Rigol}}]{dalessio2016quantum}%
  \BibitemOpen
  \bibfield  {author} {\bibinfo {author} {\bibfnamefont {L.}~\bibnamefont
  {D'Alessio}}, \bibinfo {author} {\bibfnamefont {Y.}~\bibnamefont {Kafri}},
  \bibinfo {author} {\bibfnamefont {A.}~\bibnamefont {Polkovnikov}},\ and\
  \bibinfo {author} {\bibfnamefont {M.}~\bibnamefont {Rigol}},\ }\bibfield
  {title} {\bibinfo {title} {From quantum chaos and eigenstate thermalization
  to statistical mechanics and thermodynamics},\ }\href
  {https://doi.org/10.1080/00018732.2016.1198154} {\bibfield  {journal}
  {\bibinfo  {journal} {Advances in Physics}\ }\textbf {\bibinfo {volume}
  {65}},\ \bibinfo {pages} {239} (\bibinfo {year} {2016})}\BibitemShut
  {NoStop}%
\bibitem [{\citenamefont {Brand{\~a}o}\ \emph {et~al.}(2016)\citenamefont
  {Brand{\~a}o}, \citenamefont {Harrow},\ and\ \citenamefont
  {Horodecki}}]{brandao2016local}%
  \BibitemOpen
  \bibfield  {author} {\bibinfo {author} {\bibfnamefont {F.~G. S.~L.}\
  \bibnamefont {Brand{\~a}o}}, \bibinfo {author} {\bibfnamefont {A.~W.}\
  \bibnamefont {Harrow}},\ and\ \bibinfo {author} {\bibfnamefont
  {M.}~\bibnamefont {Horodecki}},\ }\bibfield  {title} {\bibinfo {title} {Local
  random quantum circuits are approximate polynomial-designs},\ }\href
  {https://doi.org/10.1007/s00220-016-2706-8} {\bibfield  {journal} {\bibinfo
  {journal} {Communications in Mathematical Physics}\ }\textbf {\bibinfo
  {volume} {346}},\ \bibinfo {pages} {397} (\bibinfo {year}
  {2016})}\BibitemShut {NoStop}%
\bibitem [{\citenamefont {von Neumann}(1939)}]{vonNeumann1939}%
  \BibitemOpen
  \bibfield  {author} {\bibinfo {author} {\bibfnamefont {J.}~\bibnamefont {von
  Neumann}},\ }\bibfield  {title} {\bibinfo {title} {On infinite direct
  products},\ }\href {http://www.numdam.org/item/CM_1939__6__1_0/} {\bibfield
  {journal} {\bibinfo  {journal} {Compositio Mathematica}\ }\textbf {\bibinfo
  {volume} {6}},\ \bibinfo {pages} {1} (\bibinfo {year} {1939})},\ \bibinfo
  {note} {reprinted in {\sl John {von Neumann}, Collected Works, Vol. III}, A.
  H. Taub, editor, Pergamon Press, New York, 1961, nr. 6, p.
  323--399}\BibitemShut {NoStop}%
\bibitem [{\citenamefont {Hepp}(1972)}]{hepp-1972}%
  \BibitemOpen
  \bibfield  {author} {\bibinfo {author} {\bibfnamefont {K.}~\bibnamefont
  {Hepp}},\ }\bibfield  {title} {\bibinfo {title} {Quantum theory of
  measurement and macroscopic observables},\ }\href
  {https://doi.org/10.5169/seals-114381} {\bibfield  {journal} {\bibinfo
  {journal} {Helvetica Physica Acta}\ }\textbf {\bibinfo {volume} {45}},\
  \bibinfo {pages} {237} (\bibinfo {year} {1972})}\BibitemShut {NoStop}%
\bibitem [{\citenamefont {Van Den~Bossche}\ and\ \citenamefont
  {Grangier}(2023)}]{van-den-bossche-2023-a}%
  \BibitemOpen
  \bibfield  {author} {\bibinfo {author} {\bibfnamefont {M.}~\bibnamefont {Van
  Den~Bossche}}\ and\ \bibinfo {author} {\bibfnamefont {P.}~\bibnamefont
  {Grangier}},\ }\bibfield  {title} {\bibinfo {title} {Contextual unification
  of classical and quantum physics},\ }\href
  {https://doi.org/10.1007/s10701-023-00678-x} {\bibfield  {journal} {\bibinfo
  {journal} {Foundations of Physics}\ }\textbf {\bibinfo {volume} {53}},\
  \bibinfo {pages} {1} (\bibinfo {year} {2023})},\ \Eprint
  {https://arxiv.org/abs/arXiv:2209.01463} {arXiv:2209.01463} \BibitemShut
  {NoStop}%
\bibitem [{\citenamefont {Svozil}(2026)}]{svozil-2025-u}%
  \BibitemOpen
  \bibfield  {author} {\bibinfo {author} {\bibfnamefont {K.}~\bibnamefont
  {Svozil}},\ }\bibfield  {title} {\bibinfo {title} {From unitarity to
  irreversibility: The role of infinite tensor products and nested wigner's
  friends},\ }\href {https://doi.org/10.1007/s10701-025-00903-9} {\bibfield
  {journal} {\bibinfo  {journal} {Foundations of Physics}\ }\textbf {\bibinfo
  {volume} {56}},\ \bibinfo {pages} {4} (\bibinfo {year} {2026})}\BibitemShut
  {NoStop}%
\end{thebibliography}%

\end{document}